\newcommand{\ols}[1]{\mskip.5\thinmuskip\overline{\mskip-.5\thinmuskip {#1} \mskip-.5\thinmuskip}\mskip.5\thinmuskip} % overline short
\newcommand{\olsi}[1]{\,\overline{\!{#1}}} % overline short italic
\newtheorem{theorem}{Theorem}
\begin{document}

\title{Characterizing maximally many-body entangled fermionic states by using \texorpdfstring{$M$}{M}-body density matrix}

\author{Irakli Giorgadze}

\affiliation{Department of Physics and Astronomy, Purdue University, West Lafayette, Indiana 47907 USA}

\author{Haixuan Huang}

\affiliation{Department of Physics and Astronomy, Purdue University, West Lafayette, Indiana 47907 USA}

\author{Jordan Gaines}

\affiliation{Department of Physics and Astronomy, Purdue University, West Lafayette, Indiana 47907 USA}

\author{Elio J. K{\"o}nig}
\affiliation{Department of Physics, University of Wisconsin-Madison, Madison, Wisconsin 53706, USA}
\affiliation{
Max-Planck Institute for Solid State Research, 70569 Stuttgart, Germany
}

\author{Jukka I. V\"ayrynen}

\affiliation{Department of Physics and Astronomy, Purdue University, West Lafayette, Indiana 47907 USA}

\date{\today}
\begin{abstract}

Fermionic Hamiltonians play a critical role in quantum chemistry, one of the most promising use cases for near-term quantum computers. However, since encoding nonlocal fermionic statistics using conventional qubits results in significant computational overhead, fermionic quantum hardware, such as fermion atom arrays, were proposed as a more efficient platform.
In this context, we here study the many-body entanglement structure of fermionic $N$-particle states by concentrating on $M$-body reduced density matrices (DMs) across various bipartitions in Fock space. 
The von Neumann entropy of the reduced DM is a basis independent entanglement measure which generalizes the traditional quantum chemistry concept of the one-particle DM entanglement, which characterizes how a single fermion is entangled with the rest. 
We carefully examine  
upper bounds on the $M$-body entanglement, which are analogous to the volume law of conventional entanglement measures. To this end we establish a connection between $M$-body reduced DM and the mathematical structure of hypergraphs. Specifically, we show that a special class of hypergraphs, known as $t$-designs, corresponds to maximally entangled fermionic states. 
Finally, we explore fermionic many-body entanglement in random states. 
We semianalytically demonstrate that the distribution of reduced DMs associated with random fermionic states corresponds to the trace-fixed Wishart-Laguerre  random matrix ensemble. In the limit of large single-particle dimension $D$ and a non-zero filling fraction, random states asymptotically become absolutely maximally entangled.

\end{abstract}
\maketitle

\section{Introduction}\label{sec:introduction}

Quantum chemistry simulation is among the best use-cases for a near-term quantum computer~\cite{DaleyZoller2022,ReiherTroyer2017}. 
A substantial fraction of quantum chemistry applications for quantum computing require studying fermionic Hamiltonians. 
Because of the fermionic exchange sign, qubits are not naturally suited for simulating fermionic systems and come with an overhead~\cite{PhysRevA.92.062318, AbramsLloyd1997, OrtizGubernatisKnillLaflamme2002, BravyiKitaev2002, WhitfieldBiamonteAspuruGuzik2011, Ball2005, VerstraeteCirac2005, WhitfieldHavlicekTroyer2016} that may limit near-term applications.
Thus, it is more natural to use natively fermionic quantum hardware such as ultacold atoms~\cite{GonzalezCuadraZoller2023}.  
However, little emphasis has been aimed at fermions in the study of quantum information-theoretic concepts such as entanglement entropy~\cite{NielsenChuang2010}.

Conventional entanglement entropy, defined by using an entanglement cut in a single-particle basis such as in real space, does not adequately characterize the complexity of fermionic many-body states.  For example, the ground state of a quadratic (i.e., a non-interacting) fermionic Hamiltonian can be obtained using classical computers in polynomial time and can be represented as a single Slater determinant~\cite{knill2001,Terhal2002,Bravyi2004}. 
(We assume throughout a conserved number of fermions~\cite{BCS}.) 
Despite its simplicity, this Slater determinant ground state can be highly entangled when the conventional entanglement measure is used~\cite{Wolf2006, GioevKlich2006, BarthelChungSchollwock2006, LiDingYuRoscildeHaas2006, Swingle2010}. 
As a concrete example, consider the electronic ground state of a simple metal. This state is a single Slater determinant in momentum space (the filled Fermi sea) and hence is not a complex quantum state.  Despite this fact, standard entanglement measures such as the entanglement entropy associated with bipartitions in real space attribute a strong entanglement to this rather elementary state~\cite{Swingle2010}. 
This entanglement arises from the antisymmetrization of the fermionic wave function which leads to ``trivial'' correlations for fermionic states \cite{GiulianiVignale2005}. 
Moreover, the conventional entanglement measure depends on the single-particle basis. 
In contrast, traditional entanglement measures applied to qubits do not change under single-qubit rotations~\cite{NielsenChuang2010}.

In quantum chemistry, the intrinsic complexity of electron-electron interactions generally prevents ground states from being expressed as a single Slater determinant, an issue that falls under the umbrella of ``correlation effects''~\cite{PhysRev.46.1002, SzaboOstlund1996, Jensen2017}. 
Modern quantum materials provide a diverse array of examples in which these correlation effects become fundamental. Systems such as quantum spin liquids~\cite{SavaryBalents2016}, non-Fermi liquids~\cite{PhillipsAbbamonte2022}, and fractional quantum (anomalous) Hall states~\cite{ParkXu2023} exemplify the unique phases that emerge from complex many-body interactions.  These phases differ from those of classic solid-state theories in the distinctive nature of their many-body entanglement.  For instance, the fractional quantum Hall Laughlin state~\cite{Laughlin1983, ZengXu2002, HaqueSchoutens2007, DubailRezayi2012} represents a highly entangled quantum system in which correlation effects are manifestly non-trivial. Unlike traditional solid-state phases, these states exhibit exotic properties such as fractionalized excitations and topologically protected edge modes. The complexity of these strongly correlated systems makes their simulation on classical computers especially challenging, corresponding to NP-hard problems~\cite{HastingsO'Donnell2022}.

The number of Slater determinants required to represent a quantum state depends on the choice of single-particle basis. 
To analyze the correlation effects in many-body systems, we adopt a basis-independent approach that involves the use of a single-particle density matrix (DM) and its $M$-body generalization~\cite{GigenaRossignoli2021}, widely used in early quantum many-body theory~\cite{cioslowski2012,CarlsonKeller1961,Coleman1963,Ando1963,Sasaki1965} prior to the widespread adoption of field-theoretical techniques~\cite{abrikosov2012methods}. 
Since the $M$-body DM is a central object also in the present article, we will review it next. 

For a system of fermions with a $D$-dimensional single-fermion Hilbert space, a general $N$-fermion quantum state $|\Psi\rangle$ can be expressed as a linear combination of $N$-particle Slater determinant (SD) states which span the $N$-particle sector of Fock space. The $M$-body DM associated with $|\Psi\rangle$ is defined as
\begin{equation}\label{eq:definition of M-body DM}
    (\rho_{\Psi}^{(M)})_{\alpha \beta} = \langle\Psi|C^{(M)\dagger}_{\beta} C^{(M)}_{\alpha} |\Psi\rangle,
\end{equation}
where $C^{(M)\dagger}_{\alpha} = c^{\dagger}_{i_1} \cdots c^{\dagger}_{i_M}$ is an $M$-fermion creation operator and $c^{\dagger}_i$ creates a single fermion in the single-particle state $i=1,\dots,D$. The collective indices $\alpha, \beta$ label all $\binom{D}{M}$ ordered combinations of the form $[i_1, \dots ,i_M]$ such that $i_1 < i_2 < \cdots < i_M$.  Thus, $\rho^{(M)}$ is a $\binom{D}{M} \times \binom{D}{M}$ matrix. (For brevity, we will often leave out the subscript of $\rho^{(M)}$ when discussing generic states.)
When $M=1$, this expression reduces to the familiar single-particle density matrix, while higher values of $M$ can be used to capture more complex entanglement structures that characterize strongly correlated systems.  In this framework, the entanglement entropy can be interpreted with respect to an entanglement cut in Fock space, partitioning the system according to the number of particles in each sector, see Fig.~\ref{fig:summary Fock space}.  Importantly, the eigenvalues of $M$-body DM remain invariant under a change of the single-particle basis: $c^{\dagger}_{i} = \sum_k u_{ik}\tilde{c}^{\dagger}_k$ where $u$ is a unitary matrix and $\tilde{c}^{\dagger}_k$ is the fermionic creation operator in the new basis~\cite{GigenaRossignoli2021}.  Hence, the $M$-body DM allows us to characterize quantum correlations between subsets of particles in basis-independent way. 
(Other measures of fermion state complexity have been proposed recently, see e.g. Ref.~\cite{VanhalaOjanen2024}.)

In this paper, we specifically address upper bounds on the multi-particle entanglement~\cite{ZozulyaSchoutens2008}. Loosely speaking, these are analogous to the volume law of conventional entanglement measures.  By using a mapping to \textit{hypergraphs}, we
determine when these bounds can be saturated Fig.~\ref{fig:SummaryFig}(a),(c). To study maximally entangled states, we focus on small number of fermions, $N$, and small single-particle Hilbert space dimension, $D$.  This choice is limited by numerical feasibility but is nevertheless relevant for near-term quantum computing applications. 
Additionally, our investigation includes a comprehensive study of randomly sampled $N$-fermion states to explore their multi-particle entanglement properties and get insights about their behavior in the large-$D$ limit. We demonstrate that statistics of $M$-body DMs are well-described by the \textit{trace-fixed Wishart-Laguerre} random matrix ensemble. Our findings suggest that the eigenvalue distributions of these DMs differ significantly depending on whether the entanglement upper bound can be saturated: Fig.~\ref{fig:summary Eigenvalues}.

Crucially, for systems with two-body interactions, the ground state energy is determined by
minimizing a functional of $\rho^{(2)}$ over the space of all possible two-body DMs, a space which contains the information about all one- and two-particle static correlation functions~\cite{Husimi1940,Lowdin1955, Mayer1955,Tredgold1957}. 
Moreover, linear response functions of one-body observables can be cast in terms of a two-body DM~\cite{GiulianiVignale2005}. 
Therefore, although a general $M$-body approach is discussed throughout this paper, our numerical results focus on one- and two-body DMs. A challenge in such optimization schemes lies in \textit{$N$-representability problem}, which involves determining whether a given two-body (or more generally, $M$-body) DM is derivable from a valid $N$-particle quantum state~\cite{Coleman1963, Tredgold1957,Coulson1960}. This problem is QMA-hard, highlighting the difficulty of ensuring that $\rho^{(2)}$ is compatible with an antisymmetrized $N$-particle wave function~\cite{LiuVerstraete2007,SchuchVerstraete2009}. $M$-body DM methods have recently regained attention through applications of bootstrapping methods in strongly correlated systems~\cite{GaoKhalaf2024}.

\begin{figure}[!htb]
    \centering
    \subfigure[]{
        \includegraphics[width=0.48\columnwidth]{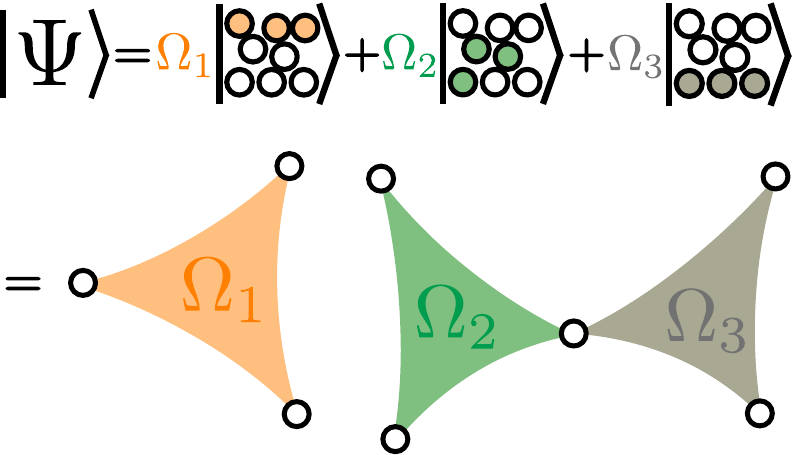}
        \label{fig:summary hypergraph}
    }
    \hspace{-1em}
    \subfigure[]{
        \includegraphics[width=0.48\columnwidth]{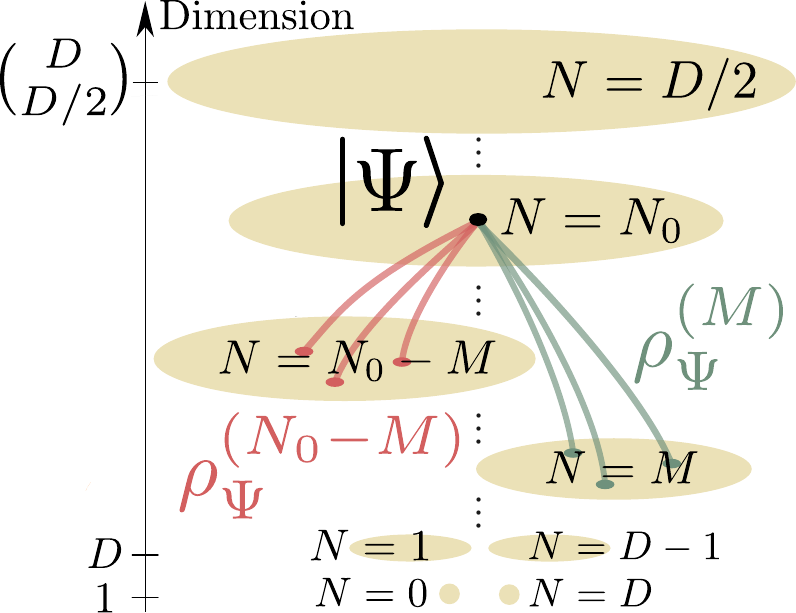}
        \label{fig:summary Fock space}
    }
    \\[-1.5em] 
    \subfigure[]{
        \includegraphics[width=0.48\columnwidth]{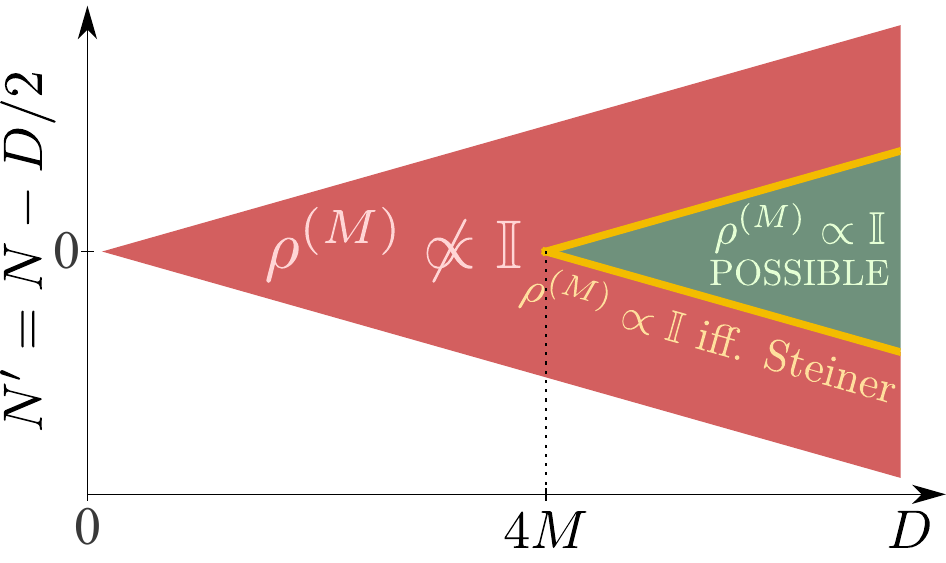}
        \label{fig:summary D-N plane}
    }
    \hspace{-1em}
    \subfigure[]{
        \includegraphics[width=0.48\columnwidth]{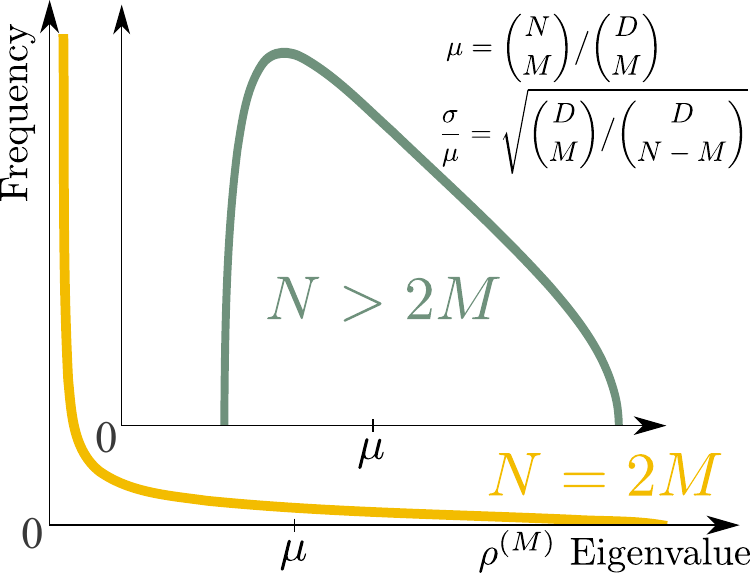}
        \label{fig:summary Eigenvalues}
    }
    \caption{
    Characterizing maximally many-body entangled fermionic states with $M$-body density matrix. (a)~An $N$-fermion pure state $\ket{\Psi}$ and its hypergraph representation. In a given basis of the $D$-dimensional single-particle Hilbert space, each $N$-fermion state can be represented by a complex-weighted $N$-uniform hypergraph of $D$ vertices. (b)~An $N_0$-fermion pure state $\ket{\Psi}$ and its many-body density matrices (DMs) located in the Fock space.  
    The DMs $\rho_\Psi^{(M)}$ and $\rho_\Psi^{(N_0-M)}$ characterize the
    correlation between $M$ fermions with the other $N_0-M$ fermions. 
    These DMs correspond to mixed states in the $M$-fermion and ($N_0-M$)-fermion subspaces, respectively. 
    A maximally mixed state is said to be maximally $M$-body entangled.  Such a state has $\rho_\Psi^{(M)}=\mu \mathbb{I}$, where $\mu=\binom{N}{M}/\binom{D}{M}$ is fixed by normalization. (c)~The existence of $N$-fermion states of single-particle dimension $D$ with $\rho^{(M)}\propto\mathbb{I}$ for various $(D,N)$ values. Such a state does not exist if $N<2M$ or $N>D-2M$ (red region). When $N=2M$ or $N=D-2M$, such a state exists if and only if its associated hypergraph is a Steiner system (yellow lines, see Sec.~\ref{subsec:existence of maximally entangled states for given D,N}). For $2M<N<D-2M$, such a state might or might not exist (green region). The plot is symmetric about $N=D/2$ due to particle-hole symmetry. (d)~Distribution of $\rho^{(M)}$ eigenvalues for a random $N$-fermion state in the large $D$ limit. When $N>2M$, the distribution is peaked with mean value $\mu$ and standard deviation $\sigma$, indicating that the random state is highly $M$-body entangled. 
    In the case when $N=2M$, however, only rare Steiner systems (see Sec.~\ref{subsec:entanglement of random state}) are maximally entangled states while a typical state is not. 
    }
    \label{fig:SummaryFig}
\end{figure}

\subsection{Qualitative summary of the main results}\label{subsec:non-technical summary}

We will now give a qualitative summary of the main results of our paper; see also Fig.~\ref{fig:SummaryFig}. 
Our primary object of study is the  $M$-body DM $\rho^{(M)}$, Eq.~(\ref{eq:definition of M-body DM}), and the associated $M$-body (von Neumann) entanglement entropy, 
\begin{equation}\label{eq:entropy1}
S( \rho^{(M)})=-\text{Tr}(\rho^{(M)} \ln{\rho^{(M)}}),  
\end{equation}
of a given $N$-fermion pure state $| \Psi \rangle$. 
We are particularly interested in states where $M$ fermions are maximally entangled with the remaining $N-M$ particles. 
The maximum entanglement entropy corresponds to the case when the $M$-body DM is maximally mixed, i.e.
$\rho^{(M)}\propto\mathbb{I}$. 
Therefore, to find a maximally entangled state, it is sufficient to find a state whose $M$-body DM is proportional to the identity matrix $\mathbb{I}$. 
The proportionality coefficient is fixed by the normalization $\text{Tr}\rho^{(M)} = \binom{N}{M}$.
This convention ensures that the entanglement entropy vanishes for SD states~\cite{GigenaRossignoli2021}. We therefore have the entropy bounds 
\begin{equation}
    0 \leq S(\rho^{(M)}) \leq  \binom{N}{M} \ln \left [\frac{ \binom{D}{M}}{ \binom{N}{M}  }\right].
\end{equation}
We mention that other normalization conventions can be found in the literature, see e.g. Ref.~\cite{HaqueSchoutens2009}. 

We will next give qualitative summaries of our results for maximally entangled states for small systems (small $N,D$,  relevant for near-term quantum devices) and then for random fermionic states. 

\subsubsection{Maximally entangled states}
 
The 1-body DM is insufficient for fully characterizing the entanglement structure of a fermionic state because it cannot account for all multi-particle correlations. 
Hence, the aim of the present paper is to provide a mathematical framework for understanding the properties of the $M$-body DM and the associated $M$-body entanglement entropy across various fermionic states.

As an illustrative example, we can consider the Greenberger–Horne–Zeilinger (GHZ) state~\cite{GreenbergerZeilinger2007} of $N$ fermions in $D=2N$ dimensional single-particle Hilbert space: $|\Psi_\text{GHZ}\rangle = \frac{1}{\sqrt{2}}(c^{\dagger}_1 \cdots c^{\dagger}_N + c^{\dagger}_{N+1} \cdots c^{\dagger}_{2N}) |0\rangle$. 
This state is a classic example of a highly-entangled state: it is an equal superposition of two half-filled Slater determinants with no joint occupied orbitals~\footnote{
Physically,  $|\Psi_\text{GHZ}\rangle$ may arise, for instance, in a model of 
spinless fermions described by 1D $t-V$ model~\cite{ZozulyaSchoutens2008}, where strong interaction  gives rise to a ground state that is in equal superposition of two charge density wave states. 
Naturally, such a state would decohere fast since a local measurement can distinguish the two components. }. 
Correspondingly, the occupation numbers 
$(\rho^{(1)}_\text{GHZ})_{ii} = 1/2$ for all $i=1,\dots,D$ as can be verified by using Eq.~(\ref{eq:definition of M-body DM}); therefore, $\rho^{(1)}_\text{GHZ}\propto\mathbb{I}$ is maximally mixed. 
Indeed, for the 1-body DM $\rho_\text{GHZ}^{(1)}$, the entanglement entropy is $S(\rho_\text{GHZ}^{(1)}) = N\ln{2}$, which corresponds to maximum entropy value for 1-body DM $S^{(1)}_\text{max}$ (see Eq.~(\ref{eq:S maximum entropy general}), below). 

However, the GHZ state is missing some two-body correlations. 
For instance, the \textit{pair} of orbitals $N,N+1$ is not occupied in either of the terms in $|\Psi_\text{GHZ}\rangle$ and thus $(\rho^{(2)}_\text{GHZ})_{N\,N+1;N\,N+1}=0$. Therefore, $\rho^{(2)}_\text{GHZ}$ is not proportional to the identity and not maximally mixed. 

To estimate the $S(\rho_\text{GHZ}^{(2)})$ entanglement entropy, we need to count the unique pairs of orbitals present in the state. Since non-overlapping $N$ orbitals are occupied in each SD of the GHZ state, there are  $2\binom{N}{2} \approx N^2$ unique occupied pairs for large $N$. In contrast, for a 2-body DM proportional to the identity, the total number of pairs is  $\binom{D}{2} \approx D^2/2=2N^2$. Hence, approximately half of the orbital pairs are missing in the GHZ state. As a result, while the GHZ state is maximally entangled at the 1-body level, the 2-body DM exhibits only half of the maximum possible entropy. 
Indeed, for the GHZ state we find $S(\rho_\text{GHZ}^{(2)}) = \binom{N}{2}\ln{2}$, while the maximum entropy in this case is given by $S^{(2)}_\text{max} = \binom{N}{2} \ln{\left[\binom{D}{2}/\binom{N}{2}\right]}$. 
In the limit $N \gg 1$, this simplifies to $S^{(2)}_\text{max} \approx 2\binom{N}{2} \ln{2} = 2 S(\rho_\text{GHZ}^{(2)})$ in agreement with the above argument. As a result, while the GHZ state is maximally entangled at the 1-body level, the 2-body DM exhibits only half of the maximum possible entropy.

The above example raises multiple natural questions: for what values of $D$ and $N$ can we find a state whose 2-body (or more generally, $M$-body) DM is proportional to the identity? 
Does such a state even exist? 
As noted towards the end of the introduction, these  questions are hard due to the so-called $N$-representability problem. While one can always construct an $M$-body DM of the correct dimensions that is proportional to the identity, as shown schematically in Fig.~\ref{fig:summary Fock space}, it need not necessarily corresponds to a physical  $N$-fermion state. Our analysis reveals that not all single-particle dimensions $D$ and number of fermions $N$ support maximally $M$-body entangled states, see  Fig.~\ref{fig:summary D-N plane}.

The condition $\rho_{\Psi}^{(M)} \propto \mathbb{I}$ imposes two competing requirements for fermionic states. First, as illustrated by the GHZ example, the state must include all possible sets of $M$-orbitals to avoid vanishing diagonal elements in $\rho^{(M)}$. Second, the state must ensure that all off-diagonal elements of $\rho^{(M)}$ are zero. From Eq.~(\ref{eq:definition of M-body DM}), it is evident that any two SDs in the state must share strictly fewer than $N-M$ orbitals; otherwise, $\rho^{(M)}$ will exhibit non-zero off-diagonal element. Hence, if the state contains too few SDs, some $M$-orbital combinations will be missing, leading to vanishing diagonal elements. Conversely, if the state includes too many SDs, off-diagonal elements will be present. Thus, for a given $D$ and $N$, identifying the existence of such states generally requires exploring every possible set of SDs in which any pair of elements shares fewer than $N - M$ single-particle orbitals. This naturally limits the numerical search to small $D,N$ values.

Additionally, we need to ensure that the diagonal elements not only are all non-zero but also equal to each other. 
To get such uniform diagonal elements of $\rho^{(M)}$, the expansion coefficients must ``compensate'' for the frequency with which each $M$-orbital combination appears in the state. For instance, if a particular set of $M$-orbitals is present in three SDs while another set appears in only one, and all expansion coefficients are equal, the corresponding diagonal elements of $\rho^{(M)}$ would differ by a factor of three. To counteract this ``imbalance'' and equalize the diagonal elements, the expansion coefficients must appropriately ``weight'' the presence of each $M$-orbital combination, ensuring that all combinations are ``equally represented'' in the state in terms of their overall contribution to the diagonal elements of $\rho^{(M)}$. This motivates the application of \textit{hypergraph} theory and the  theory of special class of hypergraphs known as \textit{balanced block designs} or $t$-\textit{designs}, which naturally capture the idea of ``equal representation'' for each $M$-orbital combination. By leveraging hypergraph theory, we can significantly reduce the numerical search process while gaining valuable new insights. To proceed, we first establish the connection between fermionic states and hypergraphs (for more details, see Sec.~\ref{subsec:mini-review on hypergraphs}).

An $N$-fermion pure state in a set of $D$ single-particle orbitals can be represented by a hypergraph of $D$ vertices and some number of edges, see Fig.~\ref{fig:summary hypergraph}. Each edge is a collection of $N$ vertices, thus representing a single Slater determinant (and the orbitals occupied in it). The number of edges in the hypergraph representation of a state is therefore equal to the number of Slater determinants in that state. 
(Since each edge contains the same number $N$ vertices, the hypergraph is said to be ``$N$-uniform''; it is also undirected.)  
Moreover, by assigning complex weights to the edges, we can fully represent a specific $N$-fermion state $|\Psi\rangle$, see Fig. \ref{fig:summary hypergraph}.

Next, we will outline the type of hypergraphs that correspond to a fermionic state where each $M$-orbital combination is equally likely. 
An $N$-uniform hypergraph is known as a $t$-\textit{design} if every size $t$ subset of the vertex set appears in the same number of edges. In the context of fermionic states, this is equivalent to ensuring that every $t$-orbital combination is represented in the same number of SDs. This condition corresponds to the extreme case of ``equal representation'' of all $t$-orbital combinations when all expansion coefficients are equal in magnitude.  
For example, the $\frac{N}{2}$-body DM (equal bipartitioning) is maximally mixed only when the underlying fermionic state is an $\frac{N}{2}$-design with every $\frac{N}{2}$-orbital combination appearing in exactly one SD, see Fig.~\ref{fig:summary D-N plane}. The $t$-designs with any $t$-orbital combination appearing in exactly one SD are called \textit{Steiner systems}~\footnote{Some known Steiner systems provide examples of maximally entangled fermionic states with large $D,N$, such as the $2$-design with $D=1641, N=41, \lambda=1$ provided in Ref.~\cite{ColbournDinitz2006}, along with many other examples of $2$-designs. An example of a $3$-design Steiner system with $D=22, N=6, \lambda=1$ can be found in Ref.~\cite{Tonchev1986}, which is one of the derived designs from \textit{Witt system}}.

Not all maximally entangled fermionic states correspond to $t$-designs. More generally, we will see that the condition $\rho_{\Psi}^{(M)} \propto \mathbb{I}$ leads to a system of equations whose solutions correspond to specific fermionic states that are maximally $M$-body entangled. 
A given set of SDs that forms a state $| \Psi \rangle$ has a corresponding system of equations. Exploring all such systems numerically is possible for not too large $D,N$.  
Moreover, because $\rho_{\Psi}^{(M)}$ eigenvalues are invariant under single-particle basis changes, this system exhibits redundancy allowing us to restrict the set of SDs to be explored. 
In the language of hypergraphs, this redundancy translates to considering only the \textit{isomorphism classes} of hypergraphs when determining if system of equations can be solved. (Two hypergraphs are isomorphic if one can be transformed into the other by relabeling vertices and edges.) 
This insight plays a crucial role in simplifying the search process during numerical analysis. By employing established methods for determining hypergraph isomorphism \cite{McKayPiperno2014}, the number of systems of equations that need to be solved is significantly reduced. However, identifying isomorphism classes is itself computationally limited to relatively small values of $D$ and $N$. 
By using this method, we identify several  $D,N$ where we find  maximally entangled states that are not given by a $t$-design. 
The example of $D=18, N=8$ illustrates a case where no known $t$-design exists, yet we find numerically a fermionic state that gives $\rho^{(2)}\propto\mathbb{I}$ 
[see Fig.~\ref{fig:D and N mesh plot for M=1 and M=2}]. The explicit forms of such general states are, however, too complicated to present here.  

Finally, not all $D,N$ pairs admit a maximally entangled state. For instance, for $D=10, N=5$, a search through all isomorphism classes shows that $\rho^{(2)} \not\propto\mathbb{I}$ for any fermionic state. Since the $M$- and $N-M$-body entanglement entropies are identical, it also implies that no 3-body maximum entropy state exists for this $D,N$ pair.

\subsubsection{Random states}

The behavior of the $M$-body DM, $\rho^{(M)}$, and the associated von Neumann entropy, $S(\rho^{(M)})$, in the limit of large single-particle dimension $D$ is another major focus of the present work. As numerical simulations become increasingly impractical at higher single-particle dimensions, we shift our focus to the statistical behavior of these quantities in the large-$D$ limit, specifically considering random fermionic states. Using randomly generated fermionic states, we numerically obtained random physical $M$-body DMs and investigated their statistical properties. We find that the eigenvalues of $\rho^{(M)}$ for large $D$ (but already at $D < 100$) are narrowly distributed around the eigenvalue corresponding to maximally mixed $\rho^{(M)}$. Remarkably, the eigenvalue distribution matches the predictions derived from the trace-fixed Wishart-Laguerre (WL) ensemble (see Figs.~\ref{fig:SC distribution and std deviation as a function of D}--\ref{fig:TWL general distribution for N=4,5 M=2 D=20}). 

The WL ensemble represents an ensemble of square $n \times n$ Wishart matrices, $W = H H^{\dagger}$, where $H$ is an $n \times m$ complex matrix $(m \geq n)$ with independently and identically distributed Gaussian entries~\cite{PottersBouchaud2020}. We will see that in our case $n=\binom{D}{M}$ (which is also the dimension of $\rho^{(M)}$) and $m=\binom{D}{N-M}$ (which is the dimension of the space to be traced over to obtain $\rho^{(M)}$). In the large-$n,m$ limit (corresponding to $D \gg 1$), the eigenvalue distribution of $W$ converges to the well-known Mar\v{c}enko-Pastur distribution~\cite{PottersBouchaud2020}, whose shape is controlled by the single  parameter $c = n/m$.

In order to relate to $M$-body DMs, we consider the trace-fixed WL ensemble as a representation of the ensemble of $M$-body DMs.
In the limit $c \rightarrow 0$ (achieved when $D \gg N > 2M$), trace-fixed WL ensemble becomes equivalent to the \textit{Gaussian special unitary ensemble} (GSUE). We show that the distribution of eigenvalues of $\rho^{(M)}$ approaches the semi-circular distribution, which is the large-$D$ limit of GSUE. Both the mean $\mu$ and standard deviation $\sigma$ of this distribution decrease with increasing $D$. However, we will see that the ratio $\frac{\sigma}{\mu}=\sqrt{c}$ approaches zero as $c \to 0$. 
As a result, the eigenvalue distribution becomes highly peaked about the mean $\mu$, which represents the eigenvalue of the maximally mixed $\rho^{(M)}$. Correspondingly, as $c \rightarrow 0$, the mean of the entropy distribution rapidly converges to maximum $M$-body entanglement entropy value for the given dimension. Since this holds for arbitrary $M<N/2$, random states in the $c \to 0$ limit achieve maximum entanglement entropy for any bipartition. Such states are known as \textit{absolutely maximally entangled} states.

Allowing $c$ to be arbitrary but keeping the large $D$ limit (this corresponds to a non-vanishing filling $N/D$), the eigenvalue distribution of trace-fixed WL matrices converges to the rescaled Mar\v{c}enko-Pastur distribution. 
An interesting case is $M=N/2$ (equal bipartitioning) because $n=m$ and therefore $c=1$ regardless of $D$. Consequently, in the $D \gg N$ limit and when $N$ is odd, the eigenvalues of any $M$-body DM result in a semi-circular distribution. However, when $N$ is even, the eigenvalues of $\rho^{(N/2)}$ follow a different distribution shown in Fig.~\ref{fig:summary Eigenvalues}. This can be qualitatively explained by the fact that $\rho^{(N/2)}$ gives maximum entropy only when the underlying state is related to a Steiner system (see Sec.~\ref{subsec:mini-review on hypergraphs}), as demonstrated in Fig. \ref{fig:summary D-N plane}. Finally, we explicitly compute the mean value of the entanglement entropy for the case $c = 1$ (equal bipartitioning) and show that it stays lower than the maximum entropy by a fixed constant, independent of $D$. This is expected, as the contribution to maximum entropy in this case comes solely from the Steiner system.  
The numerical results are compared with analytical expressions, showing a high level of agreement (see Figs.~\ref{fig:SC distribution and std deviation as a function of D}--\ref{fig: mean entropy vs D for N=4 and M=1, M=2} in Sec.~\ref{sec:random states}).

\subsection{Overview}\label{subsec:overview}

The introductory Section~\ref{sec:characterization of DM and entropy} provides the review of key concepts focusing on the characterization of the $M$-body DM $\rho^{(M)}$ and the von Neumann entropy $S(\rho^{(M)})$.  To provide concrete examples, we apply these concepts to specific fermionic states. We explain the general form of the $N$-fermion state $|\Psi\rangle$ and its expansion using Slater determinant states. We then derive the expression for the $\rho^{(M)}$ and its relation to the original expansion coefficients $\Gamma_{i_1 \cdots i_N}$ following \cite{GigenaRossignoli2021}. We also discuss certain properties of $\rho^{(M)}$, bipartite-like representation of the state, and its Schmidt decomposition.

In Section~\ref{sec:maximally entangled fermionic states}, we aim to answer the question: for what $D,N$ values can we find a fermionic state $|\Psi\rangle$ giving maximum $M$-body entanglement entropy for different bipartitions. It will be related to the existence of the solution of a specific system of equations. This section establishes the relation between fermionic states and hypergraphs. Here, we also introduce a special class of hypergraphs, called $t$-designs, and state some important properties of these $t$-designs.  The language of hypergraphs and $t$-designs will be used in obtaining analytical and numerical results. 

Section~\ref{sec:random states} deals with random states. We first introduce the WL ensemble and go over some of its important properties. Then, we will define trace-fixed WL ensemble as a representative of random $M$-body DMs. We will look at different limits of the underlying parameters $D,N,c$ and derive the form of distributions of $\rho^{(M)}$ eigenvalues. Finally, the analytical expressions for mean entropy will be presented.  All analytical expressions will be compared with numerical results. Overall, this section aims to provide a comprehensive analysis of the statistical properties of the $M$-body density matrix and the associated entanglement entropy for random fermionic states in the limit of large single-particle dimension.

Several long proofs are presented in the Appendices.

\section{Characterization of density matrix and entanglement of fermionic states}\label{sec:characterization of DM and entropy}

This section reviews the concept of a fermionic $M$-body reduced DM and its corresponding entanglement entropy. We also introduce the notion of a maximally entangled fermionic state and discuss some important properties of these states.

\subsection{The \texorpdfstring{$M$}{M}-body DM and entanglement entropy review}\label{subsec:M-body DM}

For an $N$-fermion system, a general pure state $|\Psi\rangle$ can be expanded in the basis of Slater Determinants (SDs) $|i_1 i_2 \cdots i_N\rangle \equiv c^{\dagger}_{i_1}c^{\dagger}_{i_2} \cdots c^{\dagger}_{i_N}|0\rangle$ and written as ~\cite{GigenaRossignoli2021}:
\begin{equation} \label{eq:state}
|\Psi\rangle=\frac{1}{N!}\displaystyle{\sum\limits_{i_1,\dots,i_N=1}^{D}\Gamma_{i_1\cdots i_N} c^{\dagger}_{i_1}\cdots c^{\dagger}_{i_N}|0\rangle}.
\end{equation}
Here, $c^{\dagger}_i$ and $c_i$ act in a $D$-dimensional \textit{single-particle} Hilbert space to create and annihilate a single fermion in the single-particle state $i=1,\dots,D$, respectively. They satisfy standard fermionic anti-commutation relations, $\{c^{\dagger}_{i}, c_{j}\} = \delta_{ij}$. The dimensionality of the $N$-fermion Hilbert space $\mathcal{H}^{(D,N)}$ is $\binom{D}{N} = \frac{D!}{N! \, (D-N)!}$.  The complex coefficients $\Gamma_{i_1 \cdots i_N}$ are the elements of a fully antisymmetric rank-$N$ tensor and the normalization of the state corresponds to:
\begin{equation} \label{eq:norm}
\begin{aligned}
\frac{1}{N!}\displaystyle{\sum\limits_{i_1,\dots,i_N=1}^{D}|\Gamma_{i_1 \cdots i_N}|^2} = 1.
\end{aligned}
\end{equation}

Our primary object of interest is the $M$-body density matrix (DM) $\rho^{(M)}$ of an $N$-fermion state $|\Psi\rangle$ ($M=1,2 \dots, N$). The entries of $\rho^{(M)}$ (introduced in Eq.~(\ref{eq:definition of M-body DM}) and repeated here for convenience) are given by $\rho_{\alpha \alpha'}^{(M)} = \langle\Psi|C^{(M)\dagger}_{\alpha'} C^{(M)}_{\alpha} |\Psi\rangle$, where $C^{(M)\dagger}_{\alpha} = c^{\dagger}_{i_1} \cdots c^{\dagger}_{i_M}$ is an $M$-fermion creation operator. The collective index $\alpha$ labels all $\binom{D}{M}$ ordered combinations of the form $[i_1, \dots ,i_M]$ such that $i_1 < i_2 < \cdots < i_M$. To obtain a bipartite-like representation of $|\Psi\rangle$, we can rewrite Eq.~(\ref{eq:state}) as~\cite{GigenaRossignoli2021}:
\begin{equation} \label{eq:bipartite representation}
\begin{aligned}
|\Psi\rangle = \binom{N}{M}^{-1}\sum_{\alpha,\beta}\Gamma^{(M)}_{\alpha\beta} \, C^{(M)\dagger}_{\alpha} \, C^{(N-M)\dagger}_{\beta}|0\rangle,
\end{aligned}
\end{equation}
where $\Gamma^{(M)}_{\alpha\beta}$ = $\Gamma_{i_1\cdots i_M j_1 \cdots j_{N-M}}$ and $\alpha=[i_1, \dots ,i_M], \, \beta=[j_1, \dots ,j_{N-M}]$. We can view $\Gamma^{(M)}_{\alpha\beta}$ as the entries of an $n \equiv \binom{D}{M}$ by $m \equiv \binom{D}{N-M}$ complex matrix $\Gamma^{(M)}$. One can show~\cite{GigenaRossignoli2021} that $\rho^{(M)}$ is related to $\Gamma^{(M)}$ by:
\begin{equation} \label{eq:gamma gamma dagger}
\begin{aligned}
\rho^{(M)} = \Gamma^{(M)} \Gamma^{(M)\dagger}.
\end{aligned}
\end{equation}
Thus, $\rho^{(M)}$ defines the reduced DM of $M$ fermions when we ``trace out" the remaining $N-M$ fermions. It is Hermitian and positive semi-definite with $\text{Tr}\rho^{(M)} = \binom{N}{M}$. We can now define the \textit{$M$-body entanglement entropy} by assigning the von Neumann entropy to the DM $\rho^{(M)}$
\begin{equation}\label{eq:entropy}
\begin{aligned}
S( \rho^{(M)})=-\text{Tr}(\rho^{(M)} \ln{\rho^{(M)}}).
\end{aligned}
\end{equation}
Importantly, both the non-zero eigenvalues and the trace of $\rho^{(M)}$ are the same as those of $\rho^{(N-M)}$. To see this, one can perform the singular value decomposition (SVD) of $\Gamma^{(M)}$
\begin{equation} \label{eq:SVD on gamma^M}
\begin{aligned}
\Gamma^{(M)} = U^{(M)}D^{(M)}V^{(N-M) \dagger},
\end{aligned}
\end{equation}
where $U^{(M)},V^{(N-M)}$ are $\binom{D}{M} \times \binom{D}{M}$ and $\binom{D}{N-M} \times \binom{D}{N-M}$ unitary matrices and $D^{(M)}_{\nu \nu'} = \sqrt{\lambda^{(M)}_{\nu}} \delta_{\nu \nu'}$ is a $\binom{D}{M} \times \binom{D}{N-M}$ diagonal matrix. $\lambda^{(M)}_{\nu}$ denote the eigenvalues of $\rho^{(M)}$ or equivalently $\rho^{(N-M)}$. Using Eq.~(\ref{eq:SVD on gamma^M}), one can rewrite Eq.~(\ref{eq:bipartite representation}) in Schmidt-like diagonal form
\begin{equation} \label{eq:Schmidt decomposition}
\begin{aligned}
|\Psi\rangle = \binom{N}{M}^{-1}\sum\limits_{\nu=1}^{n_R} \sqrt{\lambda^{(M)}_{\nu}} \, A^{(M)\dagger}_{\nu} \, B^{(N-M)\dagger}_{\nu}|0\rangle,
\end{aligned}
\end{equation}
where $n_R$ is the rank of $\Gamma^{(M)}$ while $A^{(M)\dagger}_{\nu} = \sum_{\alpha} U^{(M)}_{\alpha \nu} C^{(M)\dagger}_{(\alpha)}$ and $B^{(N-M)\dagger}_{\nu} = \sum_{\beta} V^{(N-M)*}_{\beta \nu} C^{(N-M)\dagger}_{(\beta)}$ are ``normal'' operators, unitarily related to original creation operators, which diagonalize $\rho^{(M)}$ and $\rho^{(N-M)}$ respectively~\cite{GigenaRossignoli2021}. Since both DMs have the same spectrum, except for the number of zero eigenvalues, Eq.~(\ref{eq:entropy}) gives $S(\rho^{(M)}) = S(\rho^{(N-M)})$. This also provides a visualization, as shown in Fig.~\ref{fig:summary Fock space}, of $\rho^{(M)}$ as a mixture of $n_R$ pure $M$-fermion states, $A^{(M)\dagger}_{\nu} \ket{0}$, in the $M$-fermion sector of the Fock space. Similarly, $\rho^{(N-M)}$ corresponds to a mixed state of $n_R$ pure $(N-M)$-fermion states, $B^{(N-M)\dagger}_{\nu} \ket{0}$, living in the $(N-M)$-fermion sector of the Fock space.
Hence, each part of the bipartition carries the same information about the $M$-body entanglement of the state. Therefore, we will only consider $M\leq N/2$, unless otherwise stated.  

By choosing the convention $\text{Tr}\rho^{(M)} = \binom{N}{M}$, the SD states of the form $|\text{SD}\rangle = c^{\dagger}_{i_1} \cdots c^{\dagger}_{i_N} |0\rangle$ correspond to a $\rho^{(M)}$ with all $\binom{N}{M}$ non-zero eigenvalues equal to 1. It thus follows from Eq.~(\ref{eq:entropy}) that $S(\rho^{(M)}_\text{SD}) = 0$ for any $M$. In a fermionic system, SD states can hence be thought of as the simplest unentangled states. The other common approach is to look at the normalized density matrix $\rho^{(M)}_{n}=\rho^{(M)}/{\binom{N}{M}}$ for which $\text{Tr}\rho^{(M)}_n=1$. The von Neumann entropy for such a normalized DM is related to that of the un-normalized DM by:
\begin{equation}\label{eq:normalized DM entropy}
\begin{aligned}
S(\rho^{(M)}_n) =S( \rho^{(M)})/\binom{N}{M} +\ln{\binom{N}{M}}.
\end{aligned}
\end{equation}
Then, for SD states $|\text{SD}\rangle$, $S(\rho^{(M)}_{n, \text{SD}}) = \ln{\binom{N}{M}}$ becomes the lower bound for $M$-body entanglement entropy of fermionic states as described in \cite{Lemm2017, Coleman1963}.

\subsection{Maximum \texorpdfstring{$M$}{M}-body entanglement}\label{subsec:Maximum M-body entanglement}

As an example of the above formalism, consider the case of filling fraction $\frac{1}{r}$, i.e. $D=rN$ with $r\in \mathbb{N}$. We can then expand a generalized GHZ state using $r$ SDs as:
\begin{equation}\label{eq:GHZstate}
\begin{aligned}
|\Psi_{\text{GHZ}_r}\rangle&=\frac{1}{\sqrt{r}} \, \left\{\left|1, 2, \dots ,\frac{D}{r}\right\rangle + \left|\frac{D}{r}+1, \dots ,2\frac{D}{r}\right\rangle \right.\\
& \left. + \left|2\frac{D}{r}+1, \dots ,3\frac{D}{r}\right\rangle + \cdots \right.\\ 
&\left. + \left|(r-1)\frac{D}{r}+1, \dots ,D\right\rangle \right\}. 
\end{aligned}
\end{equation}
When $r=2$, we recover the standard GHZ state~\cite{GreenbergerZeilinger2007}.
For concreteness, we consider the 1-body DM  with entries $\rho_{ij}^{(1)} = \langle\Psi|c^{\dagger}_j c_i |\Psi\rangle$. Its eigenvalues lie in the interval $[0,1]$, and $\text{Tr}(\rho^{(1)})=N$. We can clearly see that
$(\rho^{(1)}_{\text{GHZ}_r})_{ij} =\langle\Psi_{\text{GHZ}_r}|c^{\dagger}_j c_i |\Psi_{\text{GHZ}_r}\rangle = \frac{1}{r}\delta_{ij}$. Hence, $\rho^{(1)}_{\text{GHZ}_r}$ is proportional to the identity with all $D$ eigenvalues being non-zero and equal to $\frac{1}{r}$: $\rho^{(1)}_{\text{GHZ}_r}=\frac{1}{r} \mathbb{I}$.   

As another example, we can consider a different state in the case of even single-particle dimension $D$. Following~\cite{GigenaRossignoli2021}, we use the collective pair creation operator
\begin{equation}\label{eq:collectiveopp}
\begin{aligned}
A^{\dagger} = \dfrac{1}{\sqrt{D/2}} \sum\limits_{i=1}^{D/2} c^{\dagger}_{2i-1} c^{\dagger}_{2i},
\end{aligned}
\end{equation}
to define the state $|\Psi_{2k}\rangle$ for $0 \leq k \leq D/2$
\begin{equation}\label{eq:Psi_2kstate}
\begin{aligned}
|\Psi_{2k}\rangle = \dfrac{(A^{\dagger})^k}{k!\sqrt{(2/D)^k \binom{D/2}{k}}} |0\rangle.
\end{aligned}
\end{equation}
Since the state contains $k$ pair creation operators $A^{\dagger}$, the number of fermions is $N=2k$. One can then show that the 1-body DM for $|\Psi_{2k}\rangle$ is~\cite{GigenaRossignoli2021}
\begin{equation} \label{eq:Psi_2kDM}
\begin{aligned}
\rho^{(1)}_{2k} = \frac{2k}{D} \mathbb{I}. 
\end{aligned}
\end{equation}
Thus, for $D=rN$, we get the same 1-body DM for both $|\Psi_{\text{GHZ}_r}\rangle$ and $|\Psi_{2k}\rangle$. Using Eq.~(\ref{eq:entropy}), the entropy for both states is $S(\rho^{(1)}) = N\ln{\frac{D}{N}}$ which, in the case of 1-body DM with the given $D,N$ values, corresponds to the maximum 1-body entanglement entropy [see Eq.~(\ref{eq:S maximum entropy general}) below]. 

This example shows the inability of 1-body DM $\rho^{(1)}$ to differentiate between two states. Thus, as outlined in~\cite{GigenaRossignoli2021}, one needs to consider the $M$-body DMs (for $2 \leq M \leq N/2$) to capture the differences in the entanglement structure between two states. As was shown  in~\cite{GigenaRossignoli2021}, when $M=2$, the eigenvalues of $\rho^{(2)}$ for $|\Psi_{\text{GHZ}_r}\rangle$ and $|\Psi_{2k}\rangle$ are indeed different, leading to different $2$-body entanglement.

For a general value of $M$, if the $M$-body DM of a state $|\Psi\rangle$ has the form $\rho_{\Psi}^{(M)} = \frac{\text{Tr}\rho_{\Psi}^{(M)}}{n} \mathbb{I}_{n \times n}$, then it gives maximum $M$-body entanglement entropy, or simply, \textit{maximum entropy} in Eq.~(\ref{eq:entropy})
\begin{equation} \label{eq:S maximum entropy general}
\begin{aligned}
     S(\rho_{\Psi}^{(M)}) \equiv S^{(M)}_\text{max} = \text{Tr}\rho^{(M)} \ln{\left( \frac{n}{\text{Tr}\rho^{(M)}}\right)},
\end{aligned}
\end{equation}
and the state can be called \textit{maximally $M$-body entangled}. In the case of $M=1$, we recover the maximum entropy for 1-body DM, $S^{(1)}_\text{max} = N \ln{\frac{D}{N}}$. To relate it to the standard result for maximum entropy~\cite{ZozulyaSchoutens2008}, we need to consider the normalized DM whose trace is 1, for which Eq.~(\ref{eq:S maximum entropy general}) becomes $S^{(M)}_\text{max} = \ln{n}$, where $n = \binom{D}{M}$ is a dimension of the $M$-fermion sector of Fock space.

The maximum entropy states satisfy a nesting property of entanglement, introduced in the following theorem: 

\phantomsection
\hypertarget{nesting-theorem}{\textbf{\textsc{Nesting property of a maximally entangled state:}}\label{nesting-theorem} 
    \textit{If a state $|\Psi\rangle$ is maximally $M$-body entangled, then it also is maximally $M'$-body entangled for $1 \leq M' \leq M$} (see Appendix \ref{sec:appendix proof of theorem M leads to M'} for the proof). 
This theorem also implies the contrapositive statement: if a state is not maximally $M'$-body entangled, then it is also not maximally $M$-body entangled. }

Finally, we demonstrate that the entanglement entropy for maximally entangled states satisfy the \textit{subadditivity} and \textit{strong subadditivity} inequalities. For the normalized DMs $\rho_n^{(M)}$, we can define the subadditivity of von Neumann entropy as:
\begin{equation} \label{eq:subadditivity}
\begin{aligned}
    S(\rho_{n}^{(M_1 + M_2)}) \leq S(\rho_{n}^{(M_1)}) + S(\rho_{n}^{(M_2)}),
\end{aligned}
\end{equation}
and the strong subadditivity as:
\begin{equation} \label{eq:strong subadditivity}
\begin{aligned}
    S(\rho_{n}^{(M_1 + M_2 + M_3)}) \leq & \,\, S(\rho_{n}^{(M_1 + M_3)}) + S(\rho_{n}^{(M_2 + M_3)})\\
    & - S(\rho_{n}^{(M_3)}),
\end{aligned}
\end{equation}
where each term in both inequalities represents the entropy of the corresponding normalized DMs, as defined in Eq.~(\ref{eq:entropy}). These are important relations that physically express the non-negativity of \textit{mutual information} and \textit{conditional mutual information}, respectively, and establish von Neumann entropy as a meaningful measure of many-body correlations. Therefore, the proof of these inequalities for quantum systems~\cite{LiebRuskai1973} represents a fundamental result in quantum information theory.  

In Appendix \ref{sec:appendix subadditivity and strong subadditivity}, we explicitly show the above two properties for maximally $M$-body entangled states. Strong subadditivity of other R\'enyi entropies was also shown to hold for fermionic Gaussian states in Ref.~\cite{CamiloEliens2019}.

\begin{figure*}[t]
    \centering
    \subfigure[]{    
        \centering
        \includegraphics[width=1.2\columnwidth]{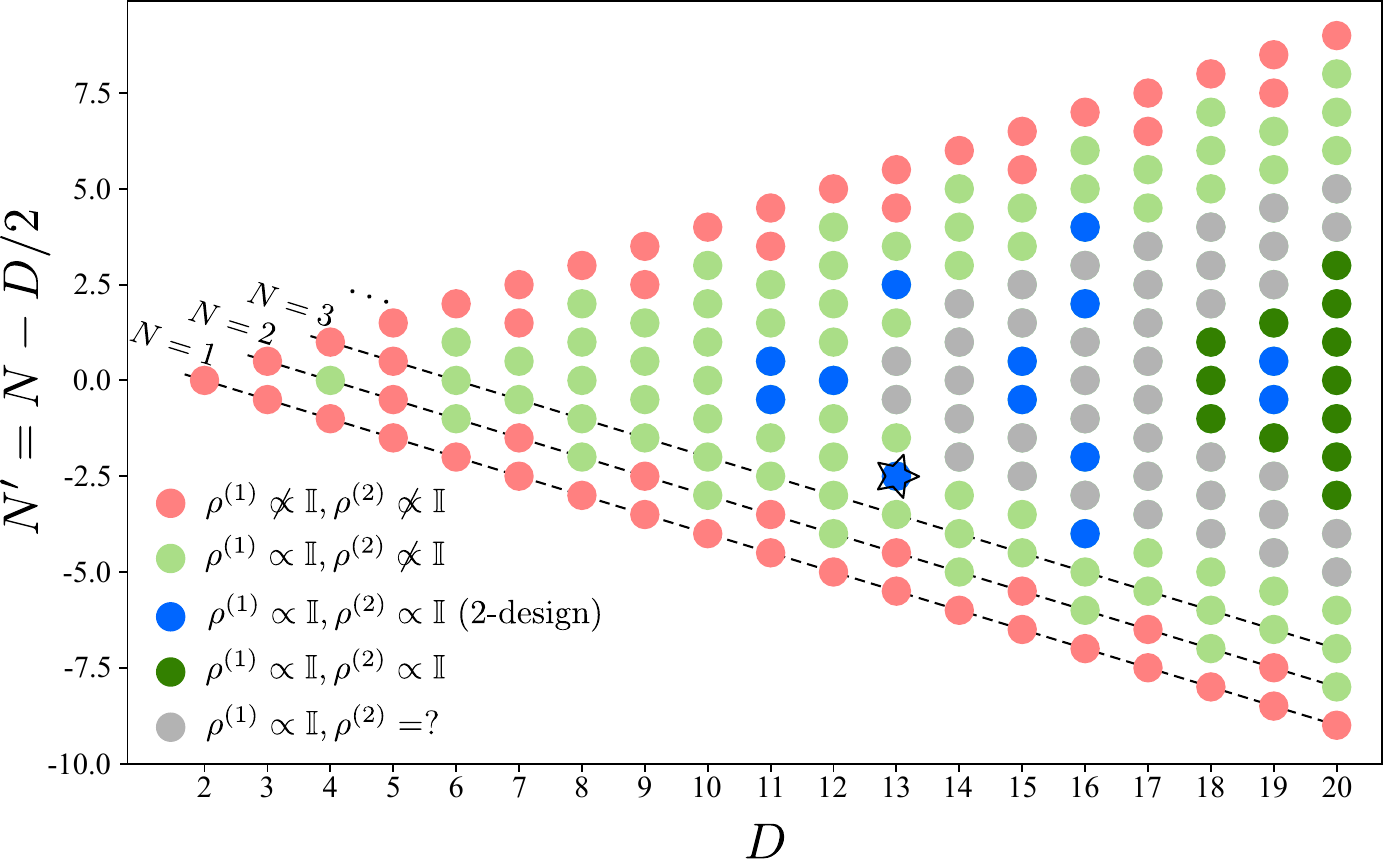}
        \label{fig:D and N mesh plot for M=1 and M=2}
    }
    \hspace{0 \columnwidth}
    \subfigure[]{
        \includegraphics[width=0.79\columnwidth]{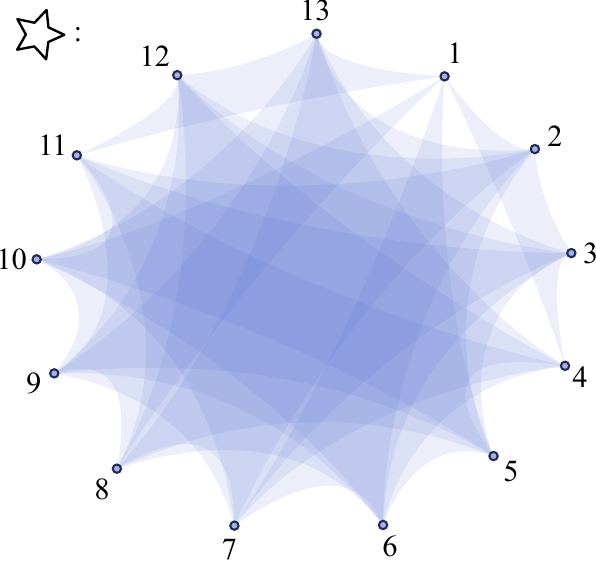}
        \label{fig:2-design as hypergraph}
    }
    \caption{Maximally many-body entangled fermionic states.
        (a) The existence of maximally $M$-body entangled (i.e., $\rho^{(M)}\propto\mathbb{I}$) $N$-fermion states in a $D$-dimensional single-particle Hilbert space, depending on the value of $D$ and $N$, for $M=1$ and $M=2$. The red dots 
        correspond to Hilbert spaces $\mathcal H^{(D,N)}$ which do not host maximally entangled states for either $\rho^{(1)}$ or $\rho^{(2)}$. The light green dots have maximally entangled states for $\rho^{(1)}$ but not for $\rho^{(2)}$. The blue and dark green dots have maximally entangled states for both $\rho^{(1)}$ and $\rho^{(2)}$. In particular, the maximally entangled states at the blue dots are associated with 2-designs (see Sec.~\ref{subsec:existence of maximally entangled states for given D,N} for a discussion on general $M$-designs giving maximally $M$-body entangled states). The gray dots have maximally entangled states for $\rho^{(1)}$ but are unknown for $\rho^{(2)}$, beyond our computational capabilities. The plot is symmetric about half-filling of the orbitals, $N=D/2$, due to particle-hole symmetry (see Appendix~\ref{sec:appendix particle-hole symmetry}). (b) The hypergraph associated with a maximally 2-body entangled $D=13, N=4$ state. The existence of such states is highlighted in (a) by a star. The hypergraph is a $2$-$(D=13,N=4,\lambda=1)$ design, hence a Steiner system (a special case of a hypergraph $t$-design, see Sec.~\ref{subsec:mini-review on hypergraphs}). Maximally $(N/2)$-body entangled states are all and only associated with Steiner systems (Sec.~\ref{subsec:existence of maximally entangled states for given D,N}). The hypergraph shown here is the only $2$-design Steiner system with $13$ vertices and each edge containing $4$ vertices. Therefore, it is the only hypergraph that yields maximally $2$-body entangled states for $D=13$ and $N=4$. The hypergraph has $b=13$ edges, corresponding to 13 constituent SDs of the maximally entangled states. The components of the states in each SD are all equal to $1/\sqrt{13}$ up to a phase (see Appendix \ref{sec:appendix M-designs give maximally entangled states}). In the case when $M=N/2$, only Steiner systems give rise to maximally entangled states. 
    }
    \label{fig: N and D mesh plots and 2-design}
\end{figure*}

\section{Maximally entangled \texorpdfstring{$N$}{N}-fermion states}\label{sec:maximally entangled fermionic states}

Our goal in this section is to determine whether there is an $N$-fermion state $\ket{\Psi} \in \mathcal{H}^{(D, N)}$ such that its reduced $M$-body density matrix is proportional to the identity matrix, $\rho_\Psi^{(M)} = \frac{\text{Tr}\rho^{(M)}_{\Psi}}{n}\mathbb{I}_{n\times n}$ where $n=\binom{D}{M}$ and $\text{Tr}\rho^{(M)}_{\Psi} = \binom{N}{M}$. This corresponds to $\ket{\Psi}$ being maximally $M$-body entangled.

We will relate this question to the problem of solving a particular system of linear equations in the hypergraph representation of the state $\ket{\Psi}$.  Specifically, for a given value of $M$, the coefficient matrix of the system of equations is exactly the \textit{$M^\text{th}$ incidence matrix} $A^{(M)}$ of the hypergraph associated with $\ket{\Psi}$.  Using this approach, we will show that \textit{$t$-designs} are maximally $M$-body entangled $N$-fermion states when $t = M$. Section \ref{subsec:mini-review on hypergraphs} includes a brief review of the theory of hypergraphs, $t$-designs and related topics.

\subsection{Criteria for maximally entangled states}\label{subsec:setting up the system of equations}

To begin, we note that if the DM is proportional to the identity it will remain so under any unitary transformation.  Hence, we can search for states that satisfy the above condition in the original basis of SDs neglecting equivalent rotations. 

For a general $N-$fermion state $\ket{\Psi}=\sum_{k=1}^b \alpha_k\ket{\text{SD}_k}$ ($\forall k, \alpha
_k\neq 0$), we will show that $\rho_\Psi^{(M)}$ being simply diagonal imposes a condition on its constituent SDs $\{\ket{\text{SD}_k}\}_{k=1}^b$.
Two SDs are said to have $p$ \textit{overlapping} orbitals if they have exactly $p$ creation operators in common.  Using the original definition of the $M$-body DM $\rho_{\alpha \beta}^{(M)} = \langle\Psi|C^{(M)\dagger}_{\beta} C^{(M)}_{\alpha} |\Psi\rangle$, we can see that any two SDs that have $N-M$ or more overlapping single-particle orbitals will produce non-zero off-diagonal elements in the $M$-body DM.  For example, when $N=3$ and $M=1$, SD states $|1 2 3\rangle$ and $|124\rangle$ give non-zero off-diagonal elements $\rho^{(1)}_{43}$ and $\rho^{(1)}_{34}$. Thus, in order to obtain a diagonal $M$-body DM, the following criterion needs to be satisfied:

\phantomsection
\hypertarget{overlap-criterion}{\textbf{\textsc{Overlap Criterion:}}\label{overlap-criterion} \textit{The number of overlapping orbitals for any pair of SDs in the expansion of a state $\ket{\Psi}$ with a diagonal $\rho_\Psi^{(M)}$ must be strictly less than $N-M$.}}

With this restriction in mind, let us first consider the case of $M = 1$ (1-body DM). Suppose we have a state $\ket{\Psi}$ which can be expanded into a linear combination of $b$ SDs satisfying the above criterion, $\ket{\Psi}=\sum_{k=1}^b \alpha_k\ket{\text{SD}_k}$. The $1$-body DM associated with $\ket{\Psi}$, will then be diagonal with entries given by
\begin{equation}\label{eq:Diagonal Elements of 1RDM in terms of Coefficients w.r.t. SDs}
    \rho_{ii}^{(1)} = \sum_{\substack{k, \\ c_i \ket{\text{SD}_k} \neq 0}} |\alpha_k|^2,
\end{equation}
where $i=1,\dots,D$ and the sum is taken over $k$ for which $\ket{\text{SD}_k}$ contains the $i^\text{th}$ orbital occupied. For instance, if the $i^\text{th}$ orbital is present in 3 terms in the SD expansion of $\ket{\Psi}$ which have coefficients $\alpha_{k_1}$, $\alpha_{k_2}$, and $\alpha_{k_3}$ respectively, then the $i^\text{th}$ diagonal element of $\rho^{(1)}_{\Psi}$ will be $|\alpha_{k_1}|^2 + |\alpha_{k_2}|^2 + |\alpha_{k_3}|^2$.

Now, if we want $\rho^{(1)}_{\Psi}$ to further be proportional to the identity with a proportionality constant $N/D$, we would require
\begin{equation} \label{eq:System of equations elementwise M=1}
\begin{aligned}
\rho^{(1)}_{ii} = \frac{N}{D}, \quad i = 1, \dots, D.
\end{aligned}
\end{equation}
Combining Eq.~(\ref{eq:Diagonal Elements of 1RDM in terms of Coefficients w.r.t. SDs}) and Eq.~(\ref{eq:System of equations elementwise M=1}), the condition on the coefficients of state $\ket{\Psi}$ and its constituent SDs can be expressed as system of equations
\begin{equation} \label{eq:System of equations M=1}
\begin{aligned}
A^{(1)} \Vec{x} = \frac{N}{D} \Vec{1}_D.
\end{aligned}
\end{equation}
Here $A^{(1)}$ is a $D\times b$ binary matrix with entries $A_{ik}^{(1)}=1$ if $c_i \ket{\text{SD}_k} \neq 0$ and 0 otherwise. $\Vec{x} = (|\alpha_1|^2, \dots, |\alpha_b|^2)^\text{T}$ is the $b$-dimensional vector of the norm-squared coefficients, and $\Vec{1}_D$ is the $D$-dimensional all-ones vector.

For a given set of SDs $\{\ket{\text{SD}_k}\}_{k=1}^b$ satisfying the \hyperref[overlap-criterion]{overlap criterion}, the matrix $A^{(1)}$ is determined, and Eq.~(\ref{eq:System of equations M=1}) can be viewed as a system of $D$ equations and $b$ unknowns. If one can find coefficients $\alpha_1,\dots,\alpha_b$ solving Eq.~(\ref{eq:System of equations M=1}), then the state $\ket{\Psi}=\sum_{k=1}^b \alpha_k\ket{\text{SD}_k}$ has $\rho^{(1)}_{\Psi} = \frac{N}{D} \mathbb{I}$. The problem of the existence of an $N$-fermion state $\ket{\Psi} \in \mathcal{H}^{(D, N)}$ with $\rho_\Psi^{(1)}=\frac{N}{D}\mathbb{I}$ can thus be answered by checking the existence of solutions to Eq.~(\ref{eq:System of equations M=1}) for all sets of SDs satisfying the overlap criterion.

We make two comments on Eq.~(\ref{eq:System of equations M=1}). First, any solution $\vec{x}$ to Eq.~(\ref{eq:System of equations M=1}) automatically satisfies the normalization condition $\sum_{k=1}^b x_k = \sum_{k=1}^b|\alpha_k|^2=1$ for the state $\ket{\Psi}$. This is a direct consequence of Eq.~(\ref{eq:System of equations elementwise M=1}), which amounts to choosing $\ket{\Psi}$ such that $\text{Tr}\rho_\Psi^{(1)} = \sum_{i=1}^{D} \rho^{(1)}_{ii} = N$, or equivalently, choosing a normalized $\ket{\Psi}$. Second, for our purpose, a solution $\vec{x}$ to Eq.~(\ref{eq:System of equations M=1}) must satisfy additional constraint: $x_k=|\alpha_k|^2\geq 0 \quad \forall k\in\{1,\dots, b\}$. Combined with the condition $\sum_k x_k=1$, the existence of such a solution is equivalent to determining whether $\frac{N}{D}\Vec{1}_D$ lies within the \textit{convex hull} of the column vectors of the coefficient matrix $A^{(1)}$. We use \textit{linear programming} methods to address this question. 

We now turn to the general $M$-body DM $\rho^{(M)}$ and extend the approach previously applied to the 1-body DM. For a state $\ket{\Psi}=\sum_{k=1}^b \alpha_k\ket{\text{SD}_k}$ with $\{\ket{\text{SD}_k}\}_{k=1}^b$ satisfying the \hyperref[overlap-criterion]{overlap criterion}, $\rho_\Psi^{(M)}$ is diagonal with diagonal entries
\begin{equation}\label{eq:Diagonal Elements of MRDM in terms of Coefficients w.r.t. SDs}
    \rho_{\beta\beta}^{(M)} = \sum_{\substack{k, \\ C_\beta^{(M)} \ket{\text{SD}_k} \neq 0}} |\alpha_k|^2,
\end{equation}
where $\beta$ is an $M$-element subset ($M$-subset) of the orbitals $\{1,\dots,D\}$ and the sum is taken over $k$ for which $\ket{\text{SD}_k}$ contains all the orbitals in $\beta$.
By further requiring $\rho_{\Psi}^{(M)} = \frac{\text{Tr}\rho_{\Psi}^{(M)}}{n} \mathbb{I}_{n \times n}$, where $n=\binom{D}{M}$ and $\text{Tr}\rho^{(M)}_{\Psi}=\binom{N}{M}$, we impose the following condition on $\ket{\Psi}$:
\begin{equation} \label{eq:System of equations general M}
\begin{aligned}
A^{(M)} \Vec{x} = \frac{\binom{N}{M}}{\binom{D}{M}} \Vec{1}_{\binom{D}{M}},
\end{aligned}
\end{equation}
where $A^{(M)}$ is a $\binom{D}{M} \times b$ binary matrix with entries $A_{\beta k}^{(M)}=1$ if $C_\beta^{(M)} \ket{\text{SD}_k} \neq 0$ and 0 otherwise, $\Vec{x} = (|\alpha_1|^2, \dots, |\alpha_b|^2)^\text{T}$, and $\vec{1}_{\binom{D}{M}}$ is the $\binom{D}{M}$-dimensional all-ones vector. Given a set of SDs satisfying the overlap criterion, the matrix $A^{(M)}$ is determined, and Eq.~(\ref{eq:System of equations general M}) can be viewed as a system of $\binom{D}{M}$ equations and $b$ unknowns, where the unknowns are subject to the constraints $\sum_k x_k=1$ and $x_k\geq 0 \quad \forall k\in\{1,\dots, b\}$. Again, if the system of equations has a solution, then a state $\ket{\Psi}$ with $\rho^{(M)}_{\Psi} \propto \mathbb{I}_{n \times n}$ can be constructed from the given set of SDs. Such a solution exists if and only if $\binom{N}{M}/\binom{D}{M}\vec{1}_{\binom{D}{M}}$ is in the convex hull of the column vectors of the coefficient matrix $A^{(M)}$.

\subsection{A brief review of hypergraphs and their relation to fermionic states}\label{subsec:mini-review on hypergraphs}

In general, for fixed $D$ and $N$, there are many sets of SDs satisfying the \hyperref[overlap-criterion]{overlap criterion}. Thus, to determine whether a state with $\rho^{(M)}_{\Psi} \propto \mathbb{I}_{n \times n}$ exists for $(D,N)$, we would need to, in principle, explore every such set of SDs and check whether a solution to Eq.~(\ref{eq:System of equations general M}) exists. This task becomes computationally challenging because the number of possible sets of SDs grows exponentially with the number of SDs in the set.

Fortunately, we can dramatically reduce the search space by identifying a hypergraph structure with each set of SDs. We will show that all sets of SDs whose corresponding hypergraphs fall in the same \textit{isomorphism} class give rise to equivalent systems of equations. Therefore, we only need to consider a representative element of each isomorphism class when searching for solutions. In what follows, we briefly review hypergraphs and establish their relation to sets of SDs.

An (undirected) \textit{hypergraph} is a set $V$, whose elements are called vertices, together with a set $E$, whose elements are called edges, where each edge $e\in E$ is a subset of $V$. If all the edges in a hypergraph contain exactly $N$ vertices, then the hypergraph is called \textit{$N$-uniform}. Associated with each set of $N$-fermion SDs $\{\ket{\text{SD}_k}\}_{k=1}^b$ of single-particle dimension $D$, there is a natural $N$-uniform hypergraph of $D$ vertices and $b$ edges, where the vertices correspond to single-particle orbitals and the edges correspond to the SDs.

A hypergraph of $D$ vertices $V=\{v_1,\dots,v_D\}$ and $b$ edges $E=\{e_1,\dots,e_b\}$ can be represented by its \textit{incidence matrix} $A$. It is a $D \times b$ binary matrix with entries $A_{ik}=1$ if $v_{i}\in e_k$ and 0 otherwise. As a generalization of the incidence matrix, the \textit{$M^\text{th}$ incidence matrix} of a hypergraph is a $\binom{D}{M} \times b$ binary matrix $A^{(M)}$ with entries $A^{(M)}_{\beta k}$=1 if $\beta\subseteq e_k$ and 0 otherwise, where $\beta$ is an $M$-element subset of $V$. The $1^\text{st}$ incidence matrix $A^{(1)}$ is simply the incidence matrix $A$.  The $M^\text{th}$ incidence matrix $A^{(M)}$ of the hypergraph associated to a set of SDs $\{\ket{\text{SD}_k}\}_{k=1}^b$ thus has entries $A_{\beta k}^{(M)}=1$ if $C_\beta^{(M)} \ket{\text{SD}_k} \neq 0$, and 0 otherwise, where $\beta$ is an $M$-subset of the orbitals $\{1,\dots,D\}$. For a given set of SDs, we then recognize the coefficient matrix in Eq.~(\ref{eq:System of equations general M}) as the $M^\text{th}$ incidence matrix of the hypergraph associated with the set of SDs.

An $N$-uniform hypergraph of $D$ vertices is known as a \textit{$t$-$(D,N,\lambda)$ design}, a \textit{$t$-design} for short, if each $t$-subset of the vertex set is contained in exactly $\lambda$ edges. A $t$-design is called a \textit{Steiner system} if each $t$-subset is contained in exactly one edge, i.e., if $\lambda=1$. 
In the language of sets of SDs, this means that every $t$-subset of orbitals appears in exactly one SD in the set. 

Two hypergraphs $(V_1, E_1)$ and $(V_2, E_2)$ are \textit{isomorphic} if there exists a bijection $f : V_1 \rightarrow V_2$ such that $\forall e \in E_1, \,\, f(e) \in E_2$, and the resulting map $f_*: E_1 \to E_2$ defined by $e \mapsto f(e)$ is also a bijection.  It follows that the $M^\text{th}$ incidence matrices of two isomorphic hypergraphs are related by row and column permutations. For two sets of SDs whose associated hypergraphs are isomorphic, their coefficient matrices in Eq.~(\ref{eq:System of equations general M}) are then also related by row and column permutations. Since all entries of the RHS of Eq.~(\ref{eq:System of equations general M}) are equal, a row permutation of the coefficient matrix does not change the solution space. Meanwhile, a column permutation of the coefficient matrix only changes the solution space by a permutation of the entries of $\vec{x}$, which does not change the existence of solutions to the equation.  Thus, we need only take one hypergraph from each isomorphism class and solve Eq.~(\ref{eq:System of equations general M}) for the corresponding set of SDs to exhaustively confirm the existence of a solution.

We make two comments on the hypergraphs associated with sets of SDs. First, we can represent not only a set of SDs but also a generic $N$-fermion state with an $N$-uniform hypergraph by assigning to each edge a complex weight, see Fig. \ref{fig:summary hypergraph}. Second, in terms of the associated hypergraphs, the \hyperref[overlap-criterion]{overlap criterion} for a set of SDs is equivalent to any two edges in the hypergraph sharing fewer than $N-M$ vertices.

\subsection{Searching for maximally entangled states using hypergraph isomorphism}\label{subsec:hypergraph isomorphism numercis method}

We now present the general procedure for searching for states with $\rho^{(M)}_{\Psi} \propto \mathbb{I}$. For given $D, N, M$, we enumerate isomorphism classes of $N$-uniform hypergraphs with $D$ vertices satisfying the \hyperref[overlap-criterion]{overlap criterion}. We start by generating hypergraphs with at least $b_\text{min}= \binom{D}{M}/ \binom{N}{M}$ edges. To see why this gives a lower bound for the number of edges in our search, consider Eq.~(\ref{eq:System of equations general M}), which has a solution only when each row of $A^{(M)}$ is non-zero. This corresponds to each $M$-subset of vertices appearing in at least one edge. There are in total $\binom{D}{M}$ $M$-subsets, and each edge has $\binom{N}{M}$ $M$-subsets, so we need at least $\binom{D}{M}/\binom{N}{M}$ edges for each $M$-subset to be in at least one edge. For each isomorphism class generated, we check if Eq.~(\ref{eq:System of equations general M}) has a solution using linear programming methods. We repeat the process for hypergraph isomorphism classes with more edges until we either cannot add another edge without breaking the overlap criterion or we find a solution. If a solution is found, then the given $D, N$ has a state with $\rho^{(M)}_{\Psi} \propto \mathbb{I}$; otherwise, such a state does not exist.

Hypergraph isomorphism classes were enumerated using the mathematics software system SageMath \cite{sagemath} and the software package \textsc{nauty}~\cite{McKayPiperno2014}. Despite the considerable boost in performance gained by working with isomorphism classes, the total number of isomorphism classes that need to be checked is still sensitive to $(D,N)$ and grows rapidly as $D,N$ increases.  As such, this approach is limited to fairly small $D,N$ values.

\subsection{Existence of maximally entangled states for given \texorpdfstring{$D,N$}{D, N}}\label{subsec:existence of maximally entangled states for given D,N}

Given a $D$-dimensional single-particle Hilbert space, we find that the existence of an $N$-fermion state with $\rho^{(M)}\propto\mathbb{I}$ depends on the values of $D$, $N$, and $M$. The numerical results for $\rho^{(1)}$ and $\rho^{(2)}$, obtained by following the algorithm of section \ref{subsec:hypergraph isomorphism numercis method}, are shown in Fig.~\ref{fig:D and N mesh plot for M=1 and M=2}.

We first point out two general features regarding the existence of maximally entangled states. First, for given $D$, $N$, and $M$, if an $M$-$(D,N,\lambda)$ design (see Sec.~\ref{subsec:mini-review on hypergraphs}) satisfying the the \hyperref[overlap-criterion]{overlap criterion} exists, then the $D,N$ values host states with $\rho^{(M)}\propto\mathbb{I}$ [see the blue dots in Fig.~\ref{fig:D and N mesh plot for M=1 and M=2} and a particular 2-design in Fig.~\ref{fig:2-design as hypergraph}]. In Appendix \ref{sec:appendix M-designs give maximally entangled states}, we show that these states can be constructed from the set of SDs corresponding to the $M$-design, with coefficients all equal to $1/\sqrt{b}$ up to a phase, where $b$ is the number of SDs in the set. Second, the existence of maximally entangled states exhibits particle-hole symmetry. Specifically, the presence (or absence) of an $N$-fermion state $\ket{\Psi}$ with $\rho_\Psi^{(M)}\propto \mathbb{I}$ implies the presence (or absence) of a $(D-N)$-fermion state $\ket{\ols{\Psi}}$ with $\rho_{\ols{\Psi}}^{(M)}\propto \mathbb{I}$. A proof of this result is provided in Appendix \ref{sec:appendix particle-hole symmetry}. The particle-hole symmetry is evident in the symmetry of the plot in Fig.~\ref{fig:D and N mesh plot for M=1 and M=2} about half-filling of the orbitals, $N=D/2$.

Next, we examine how the existence of maximally entangled states depends on the relationship between $N$ and $M$, 
considering three cases: (i) $N<2M$, (ii) $N=2M$, and (iii) $N>2M$. 
In the first case, when $N<2M$, we have 
$\text{dim}(\rho^{(N-M)})=\binom{D}{N-M}<\binom{D}{M}=\text{dim}(\rho^{(M)})$. From the Schmidt decomposition in Eq.~(\ref{eq:Schmidt decomposition}), we know that $\rho^{(M)}$ and $\rho^{(N-M)}$ have the same non-zero eigenvalues. Consequently, in this case, $\rho^{(M)}$ has zero eigenvalues, and it cannot be proportional to the identity. 
Therefore, there is no maximally $M$-body entangled state when $N < 2M$.  As an example, consider the $N = 1$ line in Fig.~\ref{fig:D and N mesh plot for M=1 and M=2} which hosts no solutions since $N < 2M$ for all $M \ge 1$. 
In the second case, when $N=2M\Leftrightarrow M=N/2$, a state has $\rho^{(M)}\propto\mathbb{I}$ only when any pair of its constituent SDs share fewer than $N-M=M$ overlapping orbitals. This implies that any $M$-subset of orbitals must appear in no more than one SD. On the other hand, Eq.~(\ref{eq:System of equations general M}) has a solution only when $A^{(M)}$ has no zero rows, meaning that each $M$-subset of orbitals must appear in at least one SD. Therefore, when $N=2M$, the only states with $\rho^{(M)}\propto\mathbb{I}$ are constructed from sets of SDs in which each $M$-subset of orbitals is contained in exactly one SD, i.e., whose associated hypergraphs are Steiner systems~\cite{Note2}.  For example, consider the blue dots with $N=4$ in Fig.~\ref{fig:D and N mesh plot for M=1 and M=2}. 

In the third case, when $N>2M\Leftrightarrow M<N/2$, we do not have a general statement as in the previous two cases, and a state with $\rho^{(M)}\propto \mathbb{I}$ might or might not exist. 

In summary, the $N$-fermion state space $\mathcal{H}^{(D, N)}$ of single-particle dimension $D$
\begin{enumerate}
    \item does not contain a state with $\rho^{(M)}\propto\mathbb{I}$ when $N<2M$ or, by particle-hole symmetry, when $N > D - 2M$, 
    \item contains a state with $\rho^{(M)}\propto\mathbb{I}$ if and only if the associated hypergraph of the state is a Steiner system, when $N=2M$ or $N = D-2M$, and 
    \item might contain a state with $\rho^{(M)}\propto\mathbb{I}$ when $2M < N < D-2M$. 
\end{enumerate}
These three statements are corroborated by our results in Fig.~\ref{fig:D and N mesh plot for M=1 and M=2} and sketched in Fig.~\ref{fig:summary D-N plane}. 
Thus, in the $D$-$N$ plane, maximally entangled states may only exist in the triangular region defined by $ 2M \leq N \leq D - 2M $.  
Thus, the region hosting $\rho^{(M)}\propto\mathbb{I}$ states shrinks as $M$ increases, cf. $\rho^{(1)}$ and $\rho^{(2)}$ in Fig.~\ref{fig:D and N mesh plot for M=1 and M=2}. 

Finally, we comment on the specific cases where $N=2$ or $N=D-2$, and $M=1$. In these specific cases, a maximally entangled state exists when $D$ is even, but no such state exists when $D$ is odd.
To understand this pattern, we recall the \hyperref[overlap-criterion]{overlap criterion} that for an $N$-fermion state to have $\rho^{(M)}\propto\mathbb{I}$, any pair of its constituent SDs must have less than $N-M$ overlapping orbitals. In the case when $N=2$ and $M=1$, any overlap is forbidden. When $D$ is even, this is always satisfied by a 2-particle GHZ state with filling fraction $2/D$ [see Eq.~(\ref{eq:GHZstate})], which we have shown to have $\rho^{(1)}\propto\mathbb{I}$ in Section \ref{subsec:Maximum M-body entanglement}. When $D$ is odd, since a set of non-overlapping 2-fermion SDs can only contain an even number of single-particle orbitals, there is at least one single-particle orbital that is not present in any SD in the set. This implies that the $A^{(1)}$ matrix of the set has a zero row and Eq.~(\ref{eq:System of equations M=1}) does not have a solution; the missing orbital will similarly have a vanishing diagonal element in $\rho^{(1)}$. Therefore, there is no 2-fermion state with $\rho^{(1)}\propto \mathbb{I}$ for odd $D$.  In contrast, for even $D$, the generalized GHZ states always have $\rho^{(1)}\propto \mathbb{I}$, causing the alternating pattern observed on the $N=2$ line [see Fig.~\ref{fig:D and N mesh plot for M=1 and M=2}]. Particle-hole symmetry results in the same pattern on the complementary $N = D - 2$ line.

\section{Random states}\label{sec:random states} 
With large single-particle dimensions $D \gtrsim 15$ [see Fig.~\ref{fig:D and N mesh plot for M=1 and M=2}], it is no longer feasible to look for maximally entangled states by searching through all hypergraph isomorphism classes as we outlined in Sec.~\ref{subsec:hypergraph isomorphism numercis method}. 
At the same time, even if a maximally $M$-body entangled state exists for given $D,N$, one may wonder if such states are a rare occurence (such as a $t$-design Steiner system, Sec.~\ref{subsec:existence of maximally entangled states for given D,N}) or if in fact most states are maximally entangled. 
To answer these questions, here we focus on the statistical behavior of the $M$-body density matrix $\rho^{(M)}$ and its associated entropy $S(\rho^{(M)})$ in the large-$D$ limit. 
    
Instead of analyzing individual states, we employ a large ensemble of randomly generated fermionic states to investigate their statistical properties.
We demonstrate that the statistics of $M$-body DMs are well-described by the trace-fixed Wishart-Laguerre (WL) random matrix ensemble. In the $D \gg 1$ limit, the eigenvalue distribution of $\rho^{(M)}$ approaches the rescaled Mar\v{c}enko-Pastur distribution with the parameter $c = \binom{D}{M}/\binom{D}{N-M}$ controlling its shape. 
We consider two limiting cases: $c=1$ (achieved when $M=N/2$, equal bipartitioning) and $c \to 0$ (achieved when $D \gg N > 2M$). We show that the eigenvalue distributions in these two cases are qualitatively different.  For $c = 1$, the distribution is broad while in the $c \to 0$ limit, the distribution becomes highly peaked around an eigenvalue corresponding to a maximum entropy state. This distinction also leads to different behaviors in the mean value of entropy for each case. We relate this difference to the fact that, for $M=N/2$, maximum entanglement entropy is achieved only by states related to Steiner systems and is therefore a rare occurrence, whereas when $M < N/2$ most states are maximally $M$-body entangled.

\subsection{Wishart-Laguerre (WL) Ensemble}\label{subsec:Classical WL ensemble}

The WL ensemble describes an ensemble of square $n \times n$ Wishart matrices $W=HH^{\dagger}$, where $H$ is an $n\times m$ matrix $(m \geq n)$ with independently and identically distributed entries from the Gaussian distribution $N(0,1)$.  Hence, the generated Hermitian matrix $W$ has correlated entries. Wishart matrices have $n$ non-negative eigenvalues since they are positive semidefinite. The joint probability density function (jpdf) of the entries of the Wishart matrix $W$ is      
\begin{equation} \label{eq:WL ensemble classic}
P_\text{WL}[W] = {\mathcal{N}}e^{-\frac{1}{2}\text{Tr}W + \gamma^2\text{Tr}(\ln{W})},
\end{equation}
where $\mathcal{N}$ is a normalization constant and $\gamma^2 = \frac{\beta}{2}(m-n+1)-1$. We take the Dyson index $\beta= 2$ to restrict $H$ to have complex entries. This choice fixes $\gamma^2 = m-n$ ~\cite{LivanVivo2018}.

The jpdf of the eigenvalues $x_i$ of $W$ can be derived from the jpdf of its entries, Eq.~(\ref{eq:WL ensemble classic}), by appending the \textit{Vandermonde} determinant~\cite{LivanVivo2018}:
\begin{equation} \label{eq:WL classic eigenval distribution}
P_\text{WL}(x_1, \dots , x_n) \propto e^{-\frac{1}{2}\sum\limits^{n}_{i=1} x_i + \gamma^2 \sum\limits^{n}_{i=1}\ln(x_i)} \prod\limits_{j<k} |x_j - x_k|^2.
\end{equation}
The average spectral density of $W$ is then given by marginalizing the jpdf Eq.~(\ref{eq:WL classic eigenval distribution}) over $n-1$ of the eigenvalues $x_i$, $P_{\text{WL}}(x) = \int_{0}^{\infty} \prod\limits_{j=1}^{n-1} dx_j P(x_1,\dots, x_n)$. The resulting form of $P_{\text{WL}}(x)$ is well known and can be found in Ref.~\cite{LivanVivo2011}. 

Here, we are interested in the asymptotic form of the spectral density $P_{\text{WL}}(x)$ in the large-$n$ (hence large-$m$ for $m \geq n$) limit. When $n\rightarrow \infty$ and when $c=n/m \leq 1$ is fixed, $P_{\text{WL}}(x)$ approaches $\frac{1}{2n}P_\text{MP}(\frac{x}{2n})$ ~\cite{LivanVivo2018}, where $P_\text{MP}$ is the Mar\v{c}enko-Pastur scaling function given by
\begin{equation} \label{eq:MP distribution}
P_\text{MP}(y) = \frac{1}{2\pi y}\sqrt{(y - \xi_{-})(\xi_{+} - y)},
\end{equation}
with ${\xi_{\pm} = (1 \pm 1/\sqrt{c})}^2$. Explicitly, with $c\leq 1$ being fixed, the spectral density of the WL ensemble has an asymptotic form in the $n\rightarrow\infty$ limit
\begin{equation} \label{eq:WL spectral density distribution}
P_{\text{WL}}(x) \overset{n \rightarrow \infty}{=} \frac{1}{2\pi x}\sqrt{\left(\frac{x}{2n} - \xi_{-}\right)\left(\xi_{+} - \frac{x}{2n} \right)}.
\end{equation}
The mean $\mu_{\text{WL}} = \langle x \rangle_{\text{WL}}$ and the standard deviation $\sigma_{\text{WL}}^2 = \langle x^2 \rangle_{\text{WL}} - \langle x \rangle_{\text{WL}}^2$ of the distribution in Eq.~(\ref{eq:WL spectral density distribution}) can be computed to be
\begin{equation} \label{eq:WL large n,m mean and std dev}
\mu_{\text{WL}} = \frac{2n}{c},~~\sigma_{\text{WL}} = \frac{2n}{\sqrt{c}}.
\end{equation}
In a further limit $c \rightarrow 0$, expanding $\xi_{\pm}\overset{c \rightarrow 0}{\approx} \frac{1}{c} \pm \frac{2}{\sqrt{c}}$ in Eq.~(\ref{eq:WL spectral density distribution}) reveals that $P_{\text{WL}}(x)$ approaches a semi-circular distribution
\begin{equation} \label{eq:WL SC distribution}
P_{\text{WL}}(x) \overset{\substack{n \to \infty \\ c \to 0}}{=} \frac{1}{2\pi \sigma_{\text{WL}}^2}\sqrt{4\sigma_{\text{WL}}^2 - (x-\mu_{\text{WL}})^2},
\end{equation}
with the same mean and standard deviation as in Eq.~(\ref{eq:WL large n,m mean and std dev}). After reviewing the WL ensemble, we will next examine its connection to the ensemble of density matrices and analyze the statistical properties of the latter.

\subsection{Trace-fixed WL ensemble in the \texorpdfstring{$c \to 0$}{c->0} limit and GSUE equivalence}\label{subsec:TFWL and GSUE equivalance}

We will demonstrate that the eigenvalue distribution of DMs $\rho^{(M)}$ matches the spectral density of a Wishart matrix ensemble with a fixed trace, $\text{Tr}\rho^{(M)}$. For that purpose, we first relate the two ensembles of matrices by relating the $\Gamma$ matrices comprising $\rho^{(M)}$ in Eq.~(\ref{eq:gamma gamma dagger}) to the $H$ matrices that compose the Wishart matrix $W$. For the $M$-body DMs $\rho^{(M)}$ of random $N$-fermion states with single-particle dimension $D$, the $\Gamma$ matrices have dimensions $\binom{D}{M}\times\binom{D}{N-M}$. Meanwhile, for a WL ensemble with parameters $n,m$ ($m\geq n$), the $H$ matrices have dimensions $n\times m$. This suggests that we consider a WL ensemble with parameters $n=\binom{D}{M}$ and $m=\binom{D}{N-M}$, and with a trace fixed to $\text{Tr}\rho^{(M)} = \binom{N}{M}$. The entries of the trace-fixed WL ensemble are distributed according to the jpdf
\begin{equation} \label{eq:WL ensemble trace-fixed}
P_\text{TWL}[W] = \mathcal{N}e^{-\frac{1}{2}\text{Tr}W + \gamma^2\text{Tr}(\ln{W})}\delta(\text{Tr}W-\text{Tr}\rho^{(M)}),
\end{equation}
where $\gamma^2 = m-n$.

To obtain the spectral density of the trace-fixed WL ensemble, we start from Eq.~(\ref{eq:WL ensemble trace-fixed}) and separate the trace part by writing $W=\frac{\text{Tr}W}{n} \mathbb{I}_{n \times n} + \delta W$, where $\text{Tr}(\delta W) = 0$. Note that since $W$ is Hermitian, $\delta W$ is also Hermitian, and both matrices are diagonalized by the same unitary transformation. Hence, due to the trace-fixing $\delta$-function, shifting the eigenvalues of $\delta W$ by $\text{Tr}\rho^{(M)}/n$, we can recover the eigenvalue distribution for $W$ itself. This allows us to look at the jpdf of eigenvalues of $\delta W$ instead of $W$. If we assume that $||\delta W||\ll\text{Tr}\rho^{(M)}/n$, where $||\delta W||$ is a spectral norm of $\delta W$, then by expanding $\ln{W}$ in the exponent of Eq.~(\ref{eq:WL ensemble trace-fixed}) up to order $\delta W^2$ and integrating out the trace, we get the jpdf of the entries of $\delta W$
\begin{equation} \label{eq:Assimptotic GSUE}
\begin{aligned}
P[\delta W] & = \int_{0}^{\infty} P_\text{TWL}[W] d (\text{Tr} W) \\
& \approx \tilde{\omega} \exp \left[-\frac{\gamma^2}{2}\text{Tr}\left([n/\text{Tr}\rho^{(M)}]^2 \delta W^2\right)\right],
\end{aligned}
\end{equation}
where 
\begin{equation} \label{eq:constant prefactor}
\begin{aligned}
\tilde\omega(\text{Tr}\rho^{(M)}) = {\mathcal{N}} \left(\frac{\text{Tr}\rho^{(M)}}{n}\right)^{\gamma^2 n} e^{-\frac{1}{2}\text{Tr}\rho^{(M)}},
\end{aligned}
\end{equation}
which is a constant after trace-fixing. The linear term, proportional to $\delta W$, vanishes since $\text{Tr}(\delta W) = 0$. The resulting jpdf of $\delta W$ is exactly that of the matrix elements of the Gaussian special unitary ensemble (GSUE). From the jpdf of the entries of $\delta W$, Eq.~(\ref{eq:Assimptotic GSUE}), we can easily obtain the jpdf of the eigenvalues $z_i$ of $\delta W$ by appending the Vandermonde determinant~\cite{LivanVivo2018} 
\begin{equation} \label{eq:Eigenvalue x jpdf}
\begin{aligned}
P(z_1,\dots,z_n) = \tilde{\omega} e^{-\frac{1}{2}\sum\limits_{i=1}^n (\gamma n z_i/\text{Tr}\rho^{(M)})^2} \prod\limits_{k<j}|z_k - z_j|^2.
\end{aligned}
\end{equation}

Again, we are interested in the limit $n=\binom{D}{M} \rightarrow \infty$. To obtain the spectral density of $\delta W$ in this limit, it is convenient to first apply a change of variables $y_i =  \frac{\gamma n}{\text{Tr}\rho^{(M)}} z_i$. The jpdf for the new variables $y_i$ is 
\begin{equation} \label{eq:Eigenvalue y jpdf}
\begin{aligned}
\tilde{P}(y_1,\dots,y_n) = \! \left[\frac{\text{Tr}\rho^{(M)}}{\gamma n}\right]^{\frac{n(n+1)}{2}} \! \tilde{\omega} e^{-\frac{1}{2}\sum\limits_{i=1}^n y_i^2} \prod\limits_{k<j}|y_k - y_j|^2.
\end{aligned}
\end{equation}
Taking the $n\rightarrow\infty$ limit, the $y_i$ variables take values from $(-\infty,\infty)$. Integrating out $n-1$ of the variables $y_i$ in Eq.~(\ref{eq:Eigenvalue y jpdf}), we find that the average spectral density of eigenvalues is in the form of a semi-circular distribution ~\cite{LivanVivo2018} 
\begin{equation} \label{eq:Eigenvalue y SC}
\begin{aligned}
\int_{-\infty}^\infty \prod_{j=1}^{n-1}dy_j\tilde{P}(y_1,\dots,y_n)   \overset{n\rightarrow\infty}{=}
\frac{1}{\pi \sqrt{2n}} \sqrt{2 - \frac{y^2}{2n}}.
\end{aligned}
\end{equation}
Finally, rescaling back to the variable $z$, we get the spectral density of $\delta W$ under the condition $||\delta W||\ll\text{Tr}\rho^{(M)}/n$ and in the $n\rightarrow \infty$ limit
\begin{equation} \label{eq:Eigenvalue x SC}
\begin{aligned}
P(z) = \frac{1}{2 \pi \sigma^2} \sqrt{4\sigma^2 - z^2},
\end{aligned}
\end{equation}

\begin{figure}[!t]
    \centering
    \subfigure[]{
        \includegraphics[width=1.0\columnwidth]{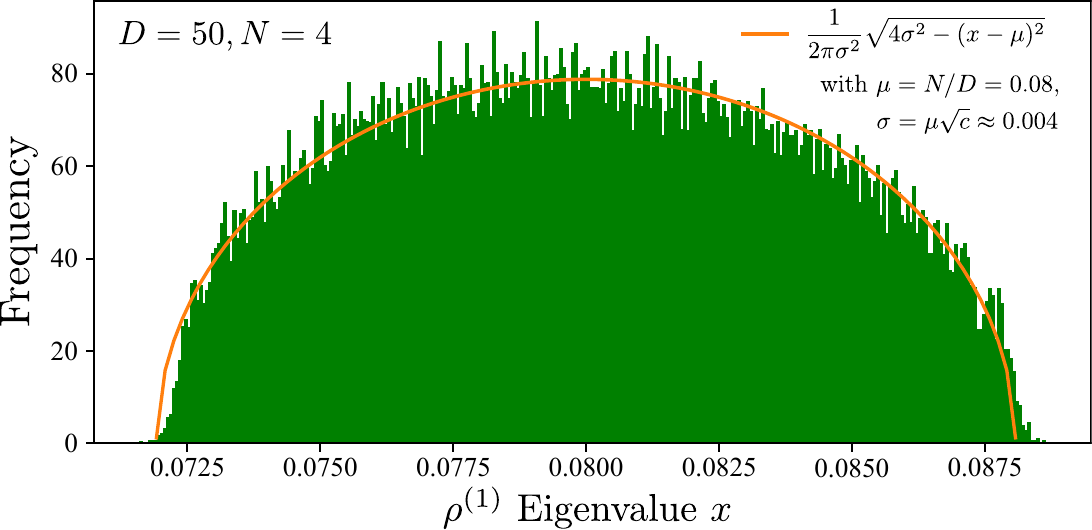}
        \label{fig:plot1 SC distribution D=50}
    }
    \subfigure[]{
        \includegraphics[width=1.0\columnwidth]{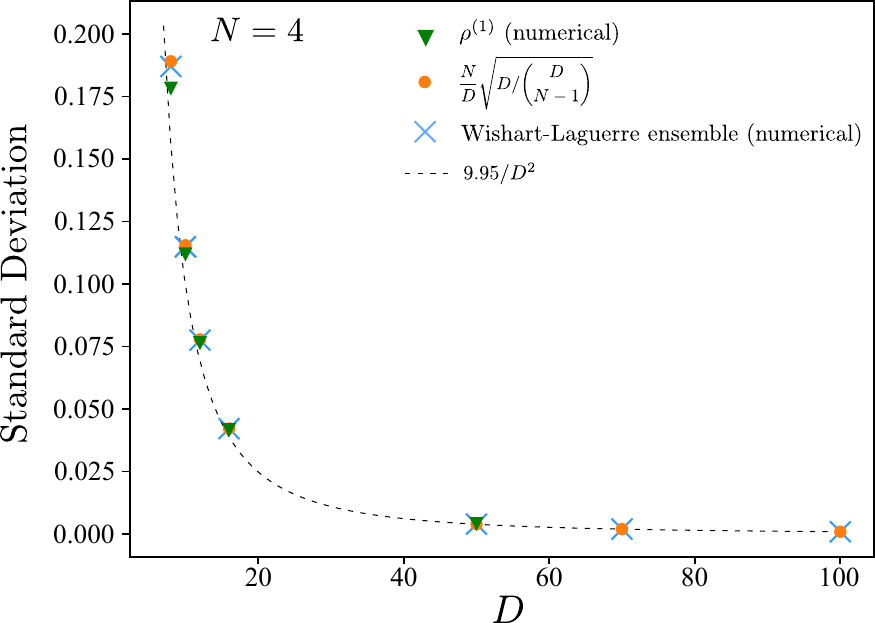}
        \label{fig:plot2 std deviation dependance on D}
    }
    \caption{One-body DM of random states. (a) The eigenvalue distribution of $\rho^{(1)}$ for randomly generated states with $D=50, N=4$. The numerical result obtained over 1000 realizations (green histogram) is well approximated by the semi-circular distribution [Eq.~(\ref{eq:Eigenvalue x SC final form})] with parameters $c=\binom{D}{M}/\binom{D}{N-M}\approx0.0025$, $\mu=N/D=0.08$, and $\sigma=\mu\sqrt{c}\approx 0.004$ (orange curve). (b) The standard deviation of $\rho^{(1)}$ eigenvalue distribution for random states with $N=4$, as a function of the single-particle dimension $D$. The numerical results obtained over 1000 realizations for each $D$ up to $D=50$ are shown with green triangles. The orange dots show the standard deviations of the corresponding semi-circular distributions [Eq.~(\ref{eq:mu and sigma trace fixe SC from GSUE})]. The blue crosses show the standard deviations of numerically obtained eigenvalue distributions for trace-fixed WL ensembles over 1000 realizations for each $D$. The $\rho^{(1)}$ numerical result approaches the analytical semi-circular and numerical WL results in the large-$D$ limit. For $D\gg N$, the standard deviation of the semi-circular distribution has an asymptotic form $\frac{N}{D}\sqrt{D/\binom{D}{N-1}} \overset{D\gg N}{\sim} N\sqrt{(N-1)!}D^{-N/2}=9.78/D^2$. The dashed curve is obtained by fitting $\alpha/D^2$ to the $D=50$ green triangle, which yields $\alpha=9.95$. 
    }
    \label{fig:SC distribution and std deviation as a function of D}
\end{figure}

\noindent where $\sigma = \frac{\text{Tr}\rho^{(M)}}{n}\sqrt{c}$. This is a semi-circular distribution of the form of Eq.~(\ref{eq:WL SC distribution}), with zero mean and a standard deviation given by $\sigma$. We note that the condition $||\delta W|| \sim \sigma \ll \text{Tr}\rho^{(M)}/n$ holds in the limit of $c=n/m \rightarrow 0$.

As stated earlier, to obtain the spectral density for the eigenvalues of the matrix $W$, we simply shift the eigenvalues of $\delta W$ by $\text{Tr}\rho^{(M)}/n$, which, in turn, shifts the entire semi-circular distribution Eq.~(\ref{eq:Eigenvalue x SC}) by the same amount. Therefore, the spectral density for the trace-fixed WL ensemble in the $n\rightarrow\infty$ and $c\rightarrow 0$ limit is given by
\begin{equation} \label{eq:Eigenvalue x SC final form}
\begin{aligned}
P_{\text{SC}}(z) = \frac{1}{2 \pi \sigma^2} \sqrt{4\sigma^2 - (z-\mu)^2},
\end{aligned}
\end{equation}
where
\begin{equation} \label{eq:mu and sigma trace fixe SC from GSUE}
\begin{aligned}
\mu = \frac{\text{Tr}\rho^{(M)}}{n} \,\, \text{and} \,\, \sigma = \frac{\text{Tr}\rho^{(M)}}{n}\sqrt{c}.
\end{aligned}
\end{equation}

We find that the eigenvalue distribution of the ensemble of DMs $\rho^{(M)}$ matches the spectral density of the trace-fixed WL ensemble when $n=\binom{D}{M} \gg 1$ and $c=\binom{D}{M}/\binom{D}{N-M} \ll 1$ [Eq.~(\ref{eq:Eigenvalue x SC final form})], see Fig.~\ref{fig:SC distribution and std deviation as a function of D}. In our numerical computation for the DM ensemble, each realization of a random $N$-fermion state with single-particle dimension $D$ is generated by randomly assigning its $\binom{D}{N}$ components in a fixed set of $\binom{D}{N}$ SD basis states. The real and imaginary parts of the components are independently drawn from the Gaussian distribution $N(0,1)$ and the state is normalized afterward. The DM $\rho^{(M)}$ for each realization is then computed using Eqs.~(\ref{eq:state}), (\ref{eq:bipartite representation}), and (\ref{eq:gamma gamma dagger}).

\subsection{Trace-fixed WL ensemble for general \texorpdfstring{$c$}{c}}\label{subsec:trace fixing of spectral density}

Interestingly, if we fix the trace of the Wishart matrix $W$ to $\text{Tr}\rho^{(M)}$ at the level of Mar\v{c}enko-Pastur scaling function in Eq.~(\ref{eq:WL spectral density distribution}), we can obtain the spectral density of the trace-fixed WL ensemble in the $n\rightarrow\infty$ limit for a general value of the parameter $c=n/m$ in the range $[0,1]$. In the $c\rightarrow 0$ limit, this distribution reduces to the semi-circular distribution Eq.~(\ref{eq:Eigenvalue x SC final form}) with the same $\mu,\sigma$ as in Eq.~(\ref{eq:mu and sigma trace fixe SC from GSUE}).

We consider the expectation value of the trace of the Wishart matrix, $\text{Tr}W = \sum_{i=1}^{n}x_i$, in the large-$n$ limit, where $x_i$-s are the $n$ eigenvalues of $W$. It is known that the expectation value of any \textit{linear statistics} [i.e. quantity of the form $\textit{L} = \sum_{i=1}^n f(x_i)]$ can be computed as a one-dimensional integral in terms of the spectral density of the ensemble as \cite{LivanVivo2011} 
\begin{equation} \label{eq:Linear statistics expextation value equation}
\begin{aligned}
\left\langle \sum_{i=1}^n f(x_i) \right\rangle_{\text{WL}} = n \int^{\mu_{\text{WL}} +2\sigma_{\text{WL}}}_{\mu_{\text{WL}} -2\sigma_{\text{WL}}}  f(x) P_{\text{WL}}(x)|_{n\rightarrow\infty} dx,
\end{aligned}
\end{equation}
where $P_{\text{WL}}(x)|_{n\rightarrow\infty
}$ is given by Eq.~(\ref{eq:WL spectral density distribution}), and the integral is taken over its domain.
Then, for $\text{Tr}W$, we have 
\begin{equation} \label{eq:trace fixing condition}
\begin{aligned}
\langle \text{Tr}W \rangle_{\text{WL}} = n \int^{\mu_{\text{WL}} +2\sigma_{\text{Wl}}}_{\mu_{\text{WL}} -2\sigma_{\text{Wl}}} x P_{\text{WL}}(x)|_{n\rightarrow\infty} dx.
\end{aligned}
\end{equation}
In order to fix the trace to the desired value of $\text{Tr}\rho^{(M)}$, we divide by $\langle \text{Tr}W \rangle_{\text{WL}}$ and multiply by $\text{Tr}\rho^{(M)}$ on both sides of Eq.~(\ref{eq:trace fixing condition}). We then perform a change of variable to $ z = \frac{\text{Tr}\rho^{(M)}}{\langle \text{Tr}W \rangle_{\text{WL}}} x$. This transformation results in a trace-fixed distribution $P_{\text{TWL}}(z)$ for the variable $z$ and has the form
\begin{equation} \label{eq:TWL derivation}
\begin{aligned}
P_{\text{TWL}}(z) = \frac{\langle \text{Tr}W \rangle_{\text{WL}}}{\text{Tr}\rho^{(M)}} P_{\text{WL}}\left(\frac{\langle \text{Tr}W \rangle_{\text{WL}}}{\text{Tr}\rho^{(M)}} z \right)_{n\rightarrow\infty}.
\end{aligned}
\end{equation}
Here, $\langle \text{Tr}W \rangle_{\text{WL}}$ is given by Eq.~(\ref{eq:trace fixing condition}), whose right-hand side is just $n\mu_\text{WL}$. Using $\mu_{\text{WL}} = 2n/c$ from Eq.~(\ref{eq:WL large n,m mean and std dev}), we find that $\langle \text{Tr}W \rangle_{\text{WL}} = 2n^2/c$. Plugging this into Eq.~(\ref{eq:TWL derivation}) and simplifying, we obtain the spectral density of the trace-fixed WL ensemble in the $n\rightarrow\infty$ limit for a general value of $c=n/m\in[0,1]$
\begin{equation} \label{eq:TWL derivation final result}
\begin{aligned}
P_{\text{TWL}}(z) = \frac{1}{2 \pi z} \sqrt{\left(\frac{z}{\mu c} - \xi_{-}\right)\left(\xi_{+} - \frac{z}{\mu c} \right)},
\end{aligned}
\end{equation}
where again $\mu = \text{Tr}\rho^{(M)}/n$. This is just the rescaled Mar\v{c}enko-Pastur distribution. The mean value and the standard deviation of this distribution match those in Eq.~(\ref{eq:mu and sigma trace fixe SC from GSUE}). Taking the $c \rightarrow 0$ limit, we recover the semi-circular distribution given by Eq.~(\ref{eq:Eigenvalue x SC final form}). 

We find that the eigenvalue distribution of the ensemble of DMs $\rho^{(M)}$ matches the spectral density of the trace-fixed WL ensemble when $n \gg 1$ for general values of $c\in[0,1]$ [Eq.~(\ref{eq:TWL derivation final result})], see Fig.~\ref{fig:plot2 N=5, M=2} for $c=1/6$ and Fig.~\ref{fig:plot1 N=4, M=2} for $c=1$.

\subsection{Many-body entanglement of  random states}\label{subsec:entanglement of random state}

We have demonstrated that the eigenvalue distribution of the DMs, $\rho^{(M)}$, of random $N$-fermion states with single-particle dimension $D$ matches the spectral density of the trace-fixed WL ensemble with parameters $n=\binom{D}{M}$ and $ m=\binom{D}{N-M}$ ($n\leq m$), in the $n\rightarrow\infty$ limit. We now translate the condition $n\leq m$ and the large-$n$ limit into the DM parameters $D,N$, and $M$. As mentioned in Sec.~\ref{subsec:M-body DM}, we consider the case where $M\leq N/2$, for which the condition $n\leq m \Leftrightarrow \binom{D}{M} \leq \binom{D}{N-M}$ is satisfied. In particular, $M=N/2$ corresponds to $c=n/m=1$, and $M<N/2$ corresponds to $c\in[0,1)$. The $n=\binom{D}{M}\rightarrow \infty$ limit is equivalent to $D \gg 1$. We have also considered the limit where $c\rightarrow 0$, which can be shown to correspond to $D \gg N>2M$.

In the large-$D$ and $D\gg N$ limit, the $M$-body DMs of random $N$-fermion states have a semi-circular eigenvalue distribution given by Eq.~(\ref{eq:Eigenvalue x SC final form}) when $M<N/2$, since $c\rightarrow 0$ in this limit. This occurs for odd $N$ with any allowed value of $M$, in which case $M$ is strictly less than $N/2$, and for even $N$ when $M < N/2$. As $c \rightarrow 0$, we see from the form of the mean and the standard deviation in Eq.~(\ref{eq:mu and sigma trace fixe SC from GSUE}) that $\sigma$ decreases much faster than $\mu$, causing the semi-circular distribution to become increasingly peaked around $\mu$. Therefore, in the large-$D$ and $D \gg N$ limit, the DMs of random states can be approximated as $\rho^{(M)} \approx \frac{\text{Tr}\rho^{(M)}}{n} \mathbb{I}_{n \times n}$ for all $M<N/2$, corresponding to the maximum entanglement entropy. 

In contrast, the $M$-body DMs exhibit a qualitatively different eigenvalue distribution when $N$ is even and $M=N/2$ (i.e., $c=1$),
\begin{equation} \label{eq:TWL distribution c=1 case}
\begin{aligned}
P_{\text{TWL}}(z)|_{c=1} = \frac{1}{2 \pi \mu z} \sqrt{z\left(4\mu - z\right)},
\end{aligned}
\end{equation}
which follows from substituting $\xi_-|_{c=1}=0$ and $\xi_+|_{c=1} = 4$ into Eq.~(\ref{eq:TWL derivation final result}). The plot in Fig.~\ref{fig:plot1 N=4, M=2} shows the eigenvalue distribution of $\rho^{(N/2)}$ corresponding to $c=1$, which is dramatically different from Fig.~\ref{fig:plot2 N=5, M=2} corresponding to $c < 1$. From the perspective of the DMs, this change occurs because when $M=N/2$, only Steiner systems correspond to states with maximum $\frac{N}{2}$-body entanglement entropy [see Sec.~\ref{subsec:existence of maximally entangled states for given D,N} and the yellow lines in Fig.~\ref{fig:summary D-N plane}]. This remains true for any value of $D$, thus heavily limiting the set of maximally $\frac{N}{2}$-body entangled states, resulting in a drastically different eigenvalue distribution. 

We have established in Sec.~\ref{subsec:existence of maximally entangled states for given D,N} that an $N$-fermion state cannot be maximally $M$-body entangled for $M> N/2$, and maximum $M$-body entanglement is only possible for $M\leq N/2$. Meanwhile, the $M$-body entanglement satisfies a \hyperref[nesting-theorem]{nesting property} that if an $N$-fermion state is maximally $M$-body entangled, it is also $M'$-body entangled for any $M'\leq M$. These two facts suggest that we define a notion of absolute maximum entanglement. An $N$-fermion state is  \textit{absolutely maximally entangled} if it is $M$-body entangled for all $M<N/2$ when $N$ is odd, and for all $M\leq N/2$ when $N$ is even. As a result, $N$-fermion random states with odd $N$ are absolutely maximally entangled in the large-$D$ and $D \gg N$ limit. On the other hand, for $N$-fermion random states with even $N$, the $(N/2)$-body DMs exhibit a broad distribution rather than a peaked one, indicating that they are not maximally $(N/2)$-body entangled, hence not absolutely maximally entangled. Later we will show that the mean $(N/2)$-body entanglement entropy of random states is almost maximum, up to an additive negative constant that does not depend on $D$. This means that random states with even $N$ are ``nearly'' absolutely maximally entangled in the large-$D$ and $D \gg N$ limit.

As an interesting observation, it is known that by performing a change of variables $s = \sqrt{z}$ in Eq.~(\ref{eq:TWL distribution c=1 case}), the distribution for the variable $s$ becomes proportional to the semi-circular distribution: $P_{\text{TWL}}(s)|_{c=1} = \frac{2}{2 \pi \mu} \sqrt{4\mu - s^2}$. It is centered around zero and $\mu=\frac{\text{Tr}\rho^{(M)}}{n}$ serves as the standard deviation in this case. Thus, in the case of $c=1$, the square roots of the eigenvalues of the corresponding DMs are distributed according to the semi-circular law, rather than the eigenvalues themselves \cite{PottersBouchaud2020}.

\begin{figure}[!h]
    \centering
    \subfigure[]{
        \includegraphics[width=1.0\columnwidth]{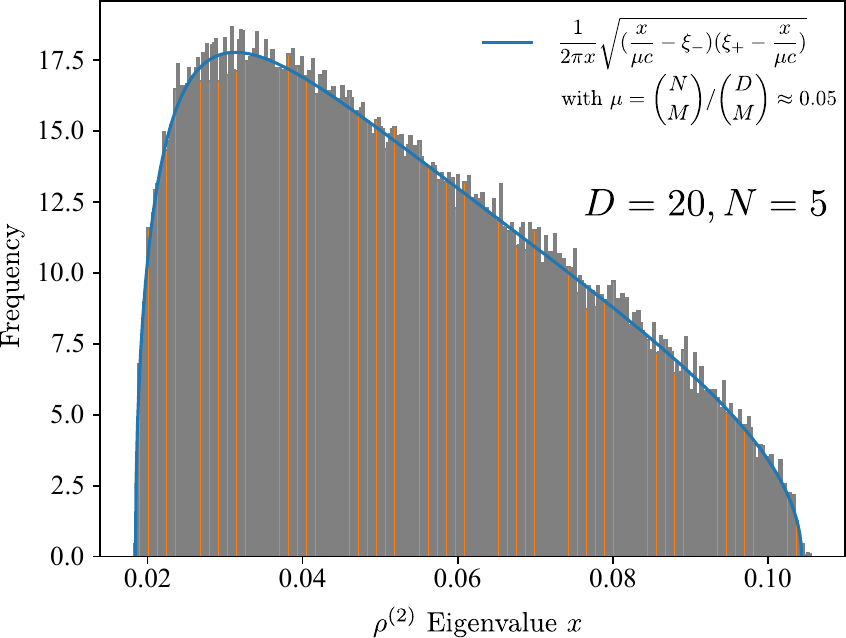}
        \label{fig:plot2 N=5, M=2}
    }
    \subfigure[]{
        \includegraphics[width=1.0\columnwidth]{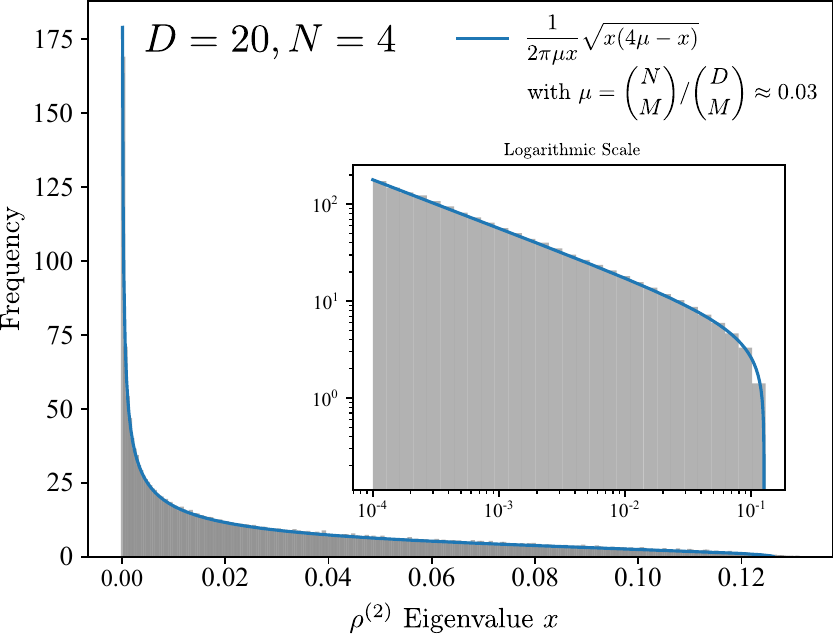}
        \label{fig:plot1 N=4, M=2}    
    }
    \caption{Eigenvalue distributions of two-body DMs $\rho^{(2)}$ for randomly generated states with $D=20$, in the cases $N=5$ and $N=4$. In both cases, the numerical results (gray histograms) obtained over 400 realizations are well approximated by the analytically derived eigenvalue distributions of the trace-fixed WL ensembles (blue curves). (a)~The case of $N=5$. The analytical WL curve [Eq.~(\ref{eq:TWL derivation final result})] has parameters $c=\binom{D}{M}/\binom{D}{N-M}=1/6$, $\mu=\binom{N}{M}/\binom{D}{M}\approx 0.05$, and $\xi_\pm=(1\pm1/\sqrt{c})^2=(1\pm\sqrt{6})^2$. The distribution is peaked around an eigenvalue $\mu$ corresponding to a maximally entangled state. (b)~The case of $N=4$. This is the case when $M=N/2$ (equal bipartitioning), or equivalently $c=1$. The analytical WL curve is given by Eq.~(\ref{eq:TWL distribution c=1 case}) with $\mu\approx 0.03$. The inset shows the plot on a logarithmic scale. The distribution is broad, which is qualitatively different from the distribution in (a), where $c\ll 1$ (or $D \gg N > 2M$). We relate this difference to the fact that for $M=N/2$, maximum entanglement entropy is achieved only by states associated with Steiner systems (Sec.~\ref{subsec:existence of maximally entangled states for given D,N}), making it a rare occurrence. In contrast, when $M<N/2$, most states are maximally $M$-body entangled in the large-$D$ limit. 
    }
    \label{fig:TWL general distribution for N=4,5 M=2 D=20}
\end{figure}

\begin{figure*}[t]
    \centering
    \subfigure[]{
        \includegraphics[width=1\columnwidth]{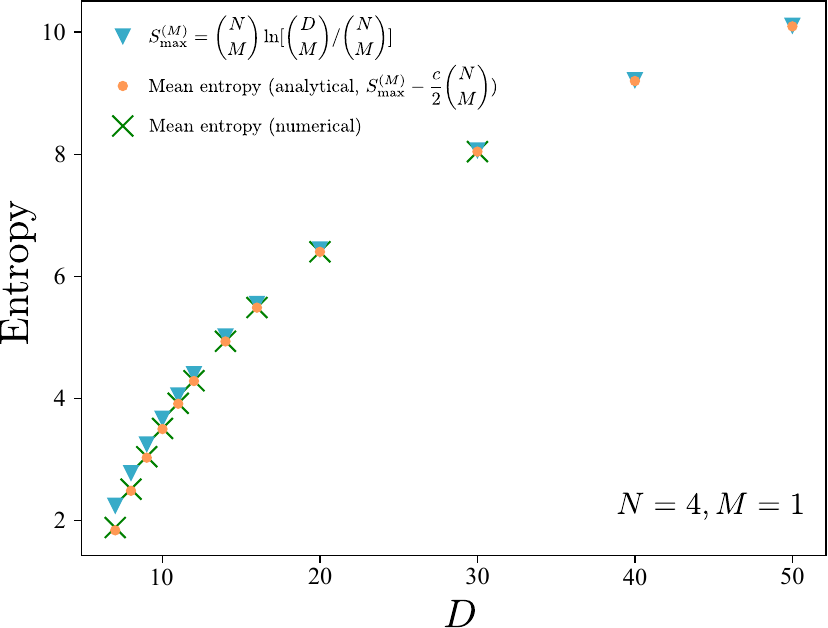}
        \label{fig:plot1 N=4, M=1 mean entropy}
    }
    \hfill
    \subfigure[]{
        \includegraphics[width=1\columnwidth]{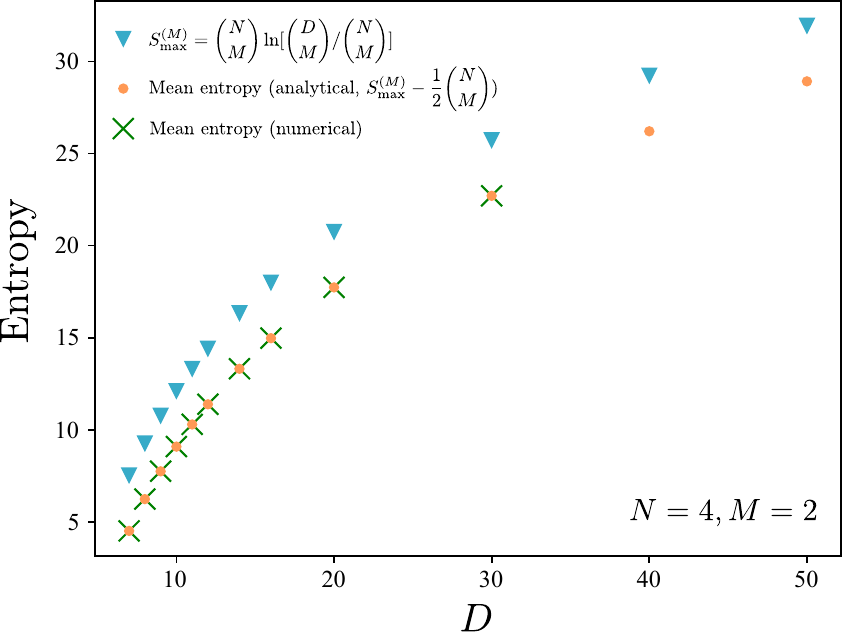}
        \label{fig:plot2 N=4, M=2 mean entropy}
    }
    \caption{Mean $M$-body entanglement entropy of randomly generated states with $N=4$ as a function of the single-particle dimension $D$, for $M=1$ and $M=2$. In both cases, the numerical results (green crosses), obtained over $10^4$ realizations for each $D$ up to $D=20$ and $10^3$ realizations for $D=30$, are well approximated by the mean entropies analytically derived from the trace-fixed WL ensembles (orange dots). The maximum entanglement entropies for each $D$ are shown as blue triangles. (a)~The case of $M=1$. The analytical WL mean entropies are given by Eq.~(\ref{eq:expectation value of entropy S over SC distr}). In the large-$D$ limit, the parameter $c=\binom{D}{M}/\binom{D}{N-M}=\frac{6}{(D-1)(D-2)}$ goes to 0 and the mean entropy approaches the maximum entropy. (b)~The case of $M=2$. This is the case of $M=N/2$ (equal bipartitioning), or equivalently $c=1$. The analytical WL mean entropies are given by Eq.~(\ref{eq:expectation value of entropy S over c=1 distr}), which are less than the maximum entropies by a constant value of $\frac{1}{2}\binom{N}{M}=3$ for all values of $D$. This asymptotic behavior of the mean entropy is qualitatively different from the behavior in (a). We relate this difference to the fact that for $M=N/2$, maximum entanglement entropy is achieved only by states associated with Steiner systems [see Sec.~\ref{subsec:existence of maximally entangled states for given D,N} and Fig.~\ref{fig:plot1 N=4, M=2} caption] and is therefore a rare occurrence. As a result, the random states do not become maximally $(N/2)$-body entangled in the large-$D$ limit.
    }
    \label{fig: mean entropy vs D for N=4 and M=1, M=2}
\end{figure*}

Next, we calculate the expectation value of the entropy using the trace-fixed WL ensemble in the large-$D$ and $D \gg N$ limit. We note that the entropy $S = -\sum_{i=1}^{n}{z_i \ln{z_i}}$ is a linear statistics, so its expectation value can be written in terms of the the spectral density $P_{\text{TWL}}(z)$ using Eq.~(\ref{eq:Linear statistics expextation value equation}) as
 
\begin{equation} \label{eq:expectation value of entropy S}
\begin{aligned}
\langle S \rangle_{\text{TWL}} = -n \int_{\mu c \xi_{-}}^{\mu c \xi_{+}} dz (z \ln{z}) P_{\text{TWL}}(z),
\end{aligned}
\end{equation}
where integration is over the domain of distribution $P_{\text{TWL}}(z)$ in Eq.~(\ref{eq:TWL derivation final result}). As discussed above, in the case when $D\gg N$, two distinct scenarios arise: $c=1$ and $c \rightarrow 0$, which we address separately.

\textit{Case of $c \rightarrow 0$}: This corresponds to the case when $D \gg N >2M$. As was pointed out, In this limit $P_{\text{TWL}}(z)$ approaches the semi-circular distribution. Using the semi-circular distribution from Eq.~(\ref{eq:Eigenvalue x SC final form}) in Eq.~(\ref{eq:expectation value of entropy S}) we can approximate the integral by performing a change of variables $z = \mu + x$, where $-2\sigma \leq x \leq 2\sigma$. As $c \to 0$, this implies $\sigma \ll \mu$, and consequently, $x \ll \mu$. Expanding $z\ln{z}$ in powers of $x$ and retaining terms up to quadratic order, we obtain
\begin{equation} \label{eq:expension of zlnz}
\begin{aligned}
z\ln{z} \approx \mu \ln{\mu} + x(\ln{\mu} + 1) + \frac{1}{2}\frac{x^2}{\mu}.
\end{aligned}
\end{equation}
The first term is constant and, due to the normalization of the distribution, integrates to $-n\mu\ln{\mu}$. The second term becomes zero since it is proportion to the mean $\langle x \rangle_{\text{TWL}}$ which we shifted to zero. Finally, the quadratic term in Eq.~(\ref{eq:expension of zlnz}) integrates to a quantity proportional to $\langle x^2 \rangle_{\text{TWL}} = \sigma^2$. Overall, the approximate integration gives us
\begin{equation} \label{eq:expectation value of entropy S over SC distr}
\begin{aligned}
\langle S \rangle_{\text{TWL}}  \overset{c \rightarrow 0}{\approx}  S^{(M)}_\text{max} - \frac{\text{Tr}\rho^{(M)}}{2} c,
\end{aligned}
\end{equation}
where we have inserted the explicit expressions for $\mu$ and $\sigma$ from Eq.~(\ref{eq:mu and sigma trace fixe SC from GSUE}). We see that, in the $c \to 0$ limit, the mean value of the entropy very rapidly approaches the maximum entropy value [see Eq.~(\ref{eq:S maximum entropy general}) for $S_{\text{max}}^{(M)}$]. Fig.~\ref{fig:plot1 N=4, M=1 mean entropy} shows the dependence of the mean entropy on $D$ for the case when $N=4$ and $M=1$.

\textit{Case of $c=1$}: This is the case of equal bipartitioning, when $M=N/2$. Using the form of the distribution given in Eq.~(\ref{eq:TWL distribution c=1 case}) and performing a change of variables $z=4\mu x$, the probability distribution for $x$ becomes $P_{\text{TWL}}(x) = \frac{4}{2\pi x}\sqrt{x(1-x)}$, where $x \in (0,1)$. At the same time, we have $z \ln{z} = 4\mu\ln{(4\mu)} x + 4\mu x\ln{x}$. The mean value of the entropy in Eq.~(\ref{eq:expectation value of entropy S}) then reads 
\begin{equation}\label{eq:expectation value of entropy S over c=1 distr, intermediate result}
\begin{aligned}
\langle S \rangle_{\text{TWL}} = \frac{-n (4\mu)^2 }{\mu} (a_1 \ln{4\mu} + a_2),
\end{aligned}
\end{equation} 
where $a_1 = \frac{1}{4}\int_0^1 dx \, x P_\text{TWL}(x) = \frac{1}{4} \langle x \rangle_{\text{TWL}} = \frac{1}{16}$ and $a_2 = \frac{1}{4} \int_0^1 dx \, x\ln{x} P_{\text{TWL}}(x) \approx -0.055393 \approx \frac{1}{16}(1/2 - \ln{4})$, with $a_2$ being numerically integrated. In evaluating $a_1$, we used the fact that $\langle x \rangle_{\text{TWL}} = 1/4$ since $\mu \equiv \langle z \rangle_{\text{TWL}} = 4\mu \langle x \rangle_{\text{TWL}}$, where expectation values of $z$ and $x$ are taken using their respective probability distributions.

Plugging everything back into Eq.~(\ref{eq:expectation value of entropy S over c=1 distr, intermediate result}) gives us
\begin{equation} \label{eq:expectation value of entropy S over c=1 distr}
\begin{aligned}
\langle S \rangle_{\text{TWL}}  \overset{c=1}{\approx}  S^{(\frac{N}{2})}_\text{max} - \frac{\text{Tr}\rho^{(\frac{N}{2})}}{2} .
\end{aligned}
\end{equation}
The mean entropy always stays lower than the maximum entropy by a constant value of $\text{Tr}\rho^{(\frac{N}{2})}/2$. Again, we relate this to the fact that when $M=N/2$, the maximum entanglement entropy is achieved only by states associated with Steiner systems [see Sec.~\ref{subsec:existence of maximally entangled states for given D,N} and Fig.~\ref{fig:summary D-N plane}]. Consequently, a random $N$-fermion state is not maximally $\frac{N}{2}$-body entangled, preventing the mean entropy from reaching the maximum value. Nevertheless, since $S^{(M)}_\text{max}$ grows as $\text{Tr}\rho^{(M)} \ln{n} = \text{Tr}\rho^{(M)} \ln{\binom{D}{M}}$ [see Eq.~(\ref{eq:S maximum entropy general})], even in this case, the mean entropy is very close to the maximum entropy for large values of $D$.  Fig.~\ref{fig:plot2 N=4, M=2 mean entropy} shows the dependence of the mean entropy on $D$ when $c=1$, with $N=4$ and $M=2$.

\section{Summary and outlook}\label{sec:summary and outlook}

We characterized maximally many-body entangled $N$-particle fermionic states by using $M$-body reduced density matrices, which provide a basis-independent way to characterize quantum correlations between subsets of particles. 
Of particular interest are the 1-body and 2-body density matrices that sufficiently characterize Hamiltonians with 2-body interactions and typical linear response functions. 
We established a connection between fermionic states and the mathematical structure of hypergraphs. Specifically, we showed that $t$-designs correspond to maximally $M$-body entangled fermionic states when $t=M$. In general, the existence of maximally $M$-body entangled $N$-fermion states for given single-particle dimension $D$ depends on the solvability of systems of equations of the form Eq.~(\ref{eq:System of equations general M}). 

We used a numerical search (that can be limited to hypergraph isomorphism classes) to find maximally $M$-body entangled states for relatively small $D$ and $N$. 
Figure~\ref{fig:D and N mesh plot for M=1 and M=2} presents the results of this search for $M=1$ and $M=2$. 

Another major focus of this work are the statistical properties of the $M$-body DMs and the associated entanglement entropy for random fermionic states in the large-$D$ limit. For $D\gg N > 2M$, 
random fermionic states become maximally $M$-body entangled with the eigenvalue distribution of the $M$-body DM having semi-circular shape, Eq.~(\ref{eq:Eigenvalue x SC final form}), sharply peaked around the eigenvalue corresponding to a maximally mixed $\rho^{(M)}$. 
This eigenvalue distribution matches the predictions derived from the trace-fixed Wishart-Laguerre random matrix ensemble, see Fig.~\ref{fig:plot1 SC distribution D=50}. 

The case of equal bipartitioning, $M=N/2$, results in a qualitatively different eigenvalue distribution given by Eq.~(\ref{eq:TWL distribution c=1 case}), see also Fig.~\ref{fig:plot1 N=4, M=2}. 
In addition, $\frac{N}{2}$-body entanglement entropy does not converge to maximum entropy value, as shown in Eq.~(\ref{eq:expectation value of entropy S over c=1 distr}) and Fig.~\ref{fig:plot2 N=4, M=2 mean entropy}. This occurs because when $M=N/2$, only Steiner systems (a special case of a hypergraph $t$-design, see Sec.~\ref{subsec:mini-review on hypergraphs}) correspond to maximally $\frac{N}{2}$-body entangled fermionic states.

This paper numerically establishes that the statistical properties of $M$-body DMs are well described by the trace-fixed WL ensemble, as one might expect from Eq.~(\ref{eq:gamma gamma dagger}). However the rigorous proof of the connection between the ensemble of $M$-body DMs derived from random fermionic states and the trace-fixed WL ensemble has not been formally established. Investigating this connection would provide a deeper understanding of why the trace-fixed WL ensemble effectively captures the eigenvalue statistics of $M$-body DMs.

As an outlook on the question of fermionic quantum computation applied to quantum chemistry and the many-electron problem, our work opens multiple new directions. 
For example, in the context of maximally entangled states one may ask how such states can be created with fermionic quantum circuits by starting from an unentangled single Slater determinant state. 
This question of circuit complexity is also related to quantifying the $M$-body entanglement properties of ground states of local and non-local interacting Hamiltonians, which could lead to an entanglement classification of many-fermion systems.

In the context of bootstrapping techniques for strongly correlated electron systems, which aim to obtain the ground state energy by minimizing a functional of the $M$-body DM~\cite{GaoKhalaf2024}, our statistical findings on $M$-body DMs could provide valuable insights. Specifically, these findings may enable to rephrase the optimization problem with $N$-representability constraints as an approximate optimization over the trace-fixed WL ensemble. Such a reformulation has the potential to significantly accelerate numerical computations and further motivates better understanding of the connection between ensemble of $M$-body DMs and trace-fixed WL ensemble.

\acknowledgements

\section{Acknowledgements}\label{sec:acknowledgements}
It is a pleasure to thank Teemu Ojanen, Leo Zhou, Marius Junge, Eslam Khalaf, for useful discussions.
JIV thanks the Max Planck Institute for Solid State Research for hospitality. 
JIV is especially grateful to Teemu Ojanen for directing his attention to complexity of fermionic states and Ref.~\cite{VanhalaOjanen2024}. Support for this research was provided by the Office of
the Vice Chancellor for Research and Graduate Education at the University of Wisconsin–Madison with funding from the Wisconsin Alumni Research Foundation.
This research was supported in part by grant NSF PHY-
2309135 to the Kavli Institute for Theoretical Physics
(KITP). EJK acknowledges hospitality by the KITP.

\appendix

\section{Subadditivity and strong subadditivity for maximally \texorpdfstring{$M$}{M}-body entangled states}\label{sec:appendix subadditivity and strong subadditivity}

Here we show that for a maximally $M$-body entangled state, the subadditivity inequality Eq.~(\ref{eq:subadditivity}) and the strong subadditivity inequality Eq.~(\ref{eq:strong subadditivity}) are satisfied. Specifically, let $\ket{\Psi}$ be an $N$-fermion state of single-particle dimension $D$. If $\ket{\Psi}$ is maximally $(M_1+M_2)$-body entangled, i.e., if $\rho_\Psi^{(M_1+M_2)} = \frac{\binom{N}{M_1+M_2}}{\binom{D}{M_1+M_2}} \mathbb{I}$, then
\begin{equation}\label{eq:subadditivity in Appendix}
\begin{aligned}
    S(\rho_{n,\Psi}^{(M_1 + M_2)}) \leq S(\rho_{n,\Psi}^{(M_1)}) + S(\rho_{n,\Psi}^{(M_2)}).
\end{aligned}
\end{equation}
If $\ket{\Psi}$ is maximally $(M_1+M_2+M_3)$-body entangled, i.e., if $\rho_\Psi^{(M_1+M_2+M_3)} = \frac{\binom{N}{M_1+M_2+M_3}}{\binom{D}{M_1+M_2+M_3}} \mathbb{I}$, then
\begin{equation}\label{eq:strong subadditivity in Appendix}
\begin{aligned}
    S(\rho_{n,\Psi}^{(M_1 + M_2 + M_3)}) \leq & \,\, S(\rho_{n,\Psi}^{(M_1 + M_3)}) + S(\rho_{n,\Psi}^{(M_2 + M_3)})\\
    & - S(\rho_{n,\Psi}^{(M_3)}).
\end{aligned}
\end{equation}

We first show Eq.~(\ref{eq:subadditivity in Appendix}). In Appendix \ref{sec:appendix proof of theorem M leads to M'}, we prove that if a state is maximally $M$-body entangled, it is maximally $M'$-body entangled for any $M'\leq M$. Since $M_1,M_2\leq M_1+M_2$, a maximally $(M_1+M_2)$-body entangled state $\ket{\Psi}$ is also maximally $M_1$ and $M_2$-body entangled. Using Eqs.~(\ref{eq:normalized DM entropy}) and~(\ref{eq:S maximum entropy general}), we get the entropy values for the normalized DMs
\begin{align}
    S(\rho_{n,\Psi}^{(M_1+M_2)}) &= \ln{\binom{D}{M_1+M_2}}\\
    S(\rho_{n,\Psi}^{(M_1)}) &= \ln{\binom{D}{M_1}} \\
    S(\rho_{n,\Psi}^{(M_2)}) &= \ln{\binom{D}{M_2}}.
\end{align}
The subadditivity inequality Eq.~(\ref{eq:subadditivity in Appendix}) now reads
\begin{equation}\label{eq:subadditivity proof 1 in appendix}
\binom{D}{M_1+M_2} \leq \binom{D}{M_1} \binom{D}{M_2}.
\end{equation}
This can be equivalently written as
\begin{equation}\label{eq:subadditivity proof 2 in appendix}
\frac{M_1! M_2!}{(M_1+M_2)!} \leq \frac{D!(D'-M_2)!}{(D-M_2)! D'!},
\end{equation}
with $D' = D - M_1$, or
\begin{equation}\label{eq:subadditivity rewriting in appendix}
    \frac{M_2(M_2-1)\cdots 1}{(M_1+M_2)\cdots (M_1+1)} \leq \frac{D(D-1)\cdots(D-M_2+1)}{D'(D'-1)\cdots(D'-M_2+1)}.
\end{equation}
On the left-hand side (LHS) of Eq.~(\ref{eq:subadditivity rewriting in appendix}), each factor in the numerator is less than a corresponding factor in the denominator, so $\text{LHS}<1$. On the right-hand side (RHS), since $D>D'=D-M_1$, each factor in the numerator is greater than a corresponding factor in the denominator, leading to $\text{RHS}>1$. Therefore, the inequality Eq.~(\ref{eq:subadditivity rewriting in appendix}) and equivalently Eq.~(\ref{eq:subadditivity in Appendix}) are satisfied.

The Eq.~(\ref{eq:strong subadditivity in Appendix}) can be shown in a similar way. A maximally $(M_1+M_2+M_3)$-body entangled state $\ket{\Psi}$ is also $M_1+M_3$, $M_2+M_3$, and $M_3$-body entangled, and Eq.~(\ref{eq:strong subadditivity in Appendix}) reads
\begin{equation}\label{eq:strong subadditivity proof 1 in appendix}
\binom{D}{M_1+M_2+M_3} \leq \frac{\binom{D}{M_1+M_3} \binom{D}{M_2+M_3}}{\binom{D}{M_3}}.
\end{equation}
This can be equivalently written as
\begin{equation}\label{eq:strong subadditivity proof 2 in appendix}
\frac{(M')!(M_3+M_2)!}{(M'+M_2)!(M_3)!} \leq \frac{\tilde{D}!(D''-M_1)!}{(\tilde{D}-M_1)!D''!}, 
\end{equation}
with $D''=D-M_2-M_3$, $\tilde{D} = D - M_3$, and $M' = M_1 + M_3$. Again, the LHS of Eq.~(\ref{eq:strong subadditivity proof 2 in appendix}) is less than 1 due to $M'>M_3$ and the RHS is greater than 1 due to $\tilde{D}>D''$. Therefore, the inequality Eq.~(\ref{eq:strong subadditivity proof 2 in appendix}) and equivalently Eq.~(\ref{eq:strong subadditivity in Appendix}) are satisfied.

\section{\texorpdfstring{$M$}{M}-designs give maximally \texorpdfstring{$M$}{M}-body entangled states}
\label{sec:appendix M-designs give maximally entangled states}

In this section, we show that $M$-designs satisfying the \hyperref[overlap-criterion]{overlap criterion} give rise to maximally $M$-body entangled states. As a reminder, in Sec.~\ref{subsec:mini-review on hypergraphs}, we introduce the language of hypergraphs and establish a one-to-one mapping between sets of $N$-fermion SDs and $N$-uniform hypergraphs. For a given set of $N$-fermion SDs $\{\ket{\text{SD}_k}\}_{k=1}^b$ of single-particle dimension $D$, or equivalently an $N$-uniform hypergraph of $D$ vertices, one considers Eq.~(\ref{eq:System of equations general M})
\begin{equation}\label{eq:Appendix system of equations general M}
    A^{(M)}\vec{x} = \frac{\binom{N}{M}}{\binom{D}{M}}\vec{1}_{\binom{D}{M}},
\end{equation}
where $A^{(M)}$ is the $M^\text{th}$ incidence matrix of the hypergraph. If the hypergraph satisfies the overlap criterion that any two edges share fewer than $N-M$ vertices, and if Eq.~(\ref{eq:Appendix system of equations general M}) admits a solution $\vec{x}$ with all entries being non-negative and summing to 1, then maximally $M$-body entangled states $\ket{\Psi}$ can be constructed from the given set of SDs. Specifically, the states given by $\ket{\Psi}=\sum_{k=1}^b \alpha_k\ket{\text{SD}_k}$, where $|\alpha_k|^2=x_k$, are maximally $M$-body entangled.

Here we prove that when the given $N$-uniform hypergraph is an $M$-design, Eq.~(\ref{eq:Appendix system of equations general M}) has a solution $\vec{x}=\frac{1}{b}\vec{1}_b$, where $b$ is the number of edges in the hypergraph. Consequently, any $M$-design satisfying the \hyperref[overlap-criterion]{overlap criterion} leads to maximally $M$-body entangled states with coefficients $\alpha_k$ all equal to $1/\sqrt{b}$ up to a phase. We also show the converse that if $\vec{x}=\frac{1}{b}\vec{1}_b$ is a solution to Eq.~(\ref{eq:Appendix system of equations general M}), then the equation must come from an $M$-design.

\begin{theorem}[\textbf{$M$-design}]\label{theorem:M-design}
Let $(V,E)$ be an N-uniform hypergraph with $D$ vertices and $b$ edges. Let $A^{(M)}\in \{0,1\}^{\binom{D}{M}\times b}$ be its $M^\text{th}$ incidence matrix. Then $(V,E)$ is an $M$-design if and only if $\vec{x}=\frac{1}{b}\vec{1}_b$ is a solution to the equation
\begin{equation}\label{eq:Appendix system of equations general M, 2}
    A^{(M)}\vec{x} = \frac{\binom{N}{M}}{\binom{D}{M}}\vec{1}_{\binom{D}{M}}.
\end{equation}
\end{theorem}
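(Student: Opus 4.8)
The plan is to translate the matrix equation into a counting statement about the hypergraph and then observe that the resulting condition is precisely the defining property of an $M$-design, once the common incidence number is pinned down. First I would evaluate the left-hand side of the theorem's equation [which is the same as Eq.~(\ref{eq:System of equations general M})] at the candidate vector $\vec{x}=\frac{1}{b}\vec{1}_b$. For each $M$-subset $\beta$ of $V$, the $\beta$-th component is $(A^{(M)}\vec{x})_\beta = \frac{1}{b}\sum_{k=1}^b A^{(M)}_{\beta k}$, which by the definition of the $M^\text{th}$ incidence matrix equals $r_\beta/b$, where $r_\beta$ denotes the number of edges containing $\beta$. Hence $\vec{x}=\frac{1}{b}\vec{1}_b$ solves the equation if and only if $r_\beta = b\binom{N}{M}/\binom{D}{M}$ for every $M$-subset $\beta$.

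The ($\Leftarrow$) direction then follows immediately: if $\vec{x}=\frac{1}{b}\vec{1}_b$ is a solution, every $M$-subset is contained in the same number $r_\beta = b\binom{N}{M}/\binom{D}{M}$ of edges, which is exactly the statement that $(V,E)$ is an $M$-design with index $\lambda = b\binom{N}{M}/\binom{D}{M}$.

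For the ($\Rightarrow$) direction I would assume $(V,E)$ is an $M$-design, so that $r_\beta = \lambda$ is a constant independent of $\beta$; the remaining task is to show that this constant equals $b\binom{N}{M}/\binom{D}{M}$. The key step is a double-counting argument applied to the set of incident pairs $\{(\beta,e) : \beta \subseteq e,\ |\beta|=M\}$. Counting by edges, each $N$-uniform edge contains exactly $\binom{N}{M}$ distinct $M$-subsets, giving a total of $b\binom{N}{M}$; counting by $M$-subsets gives $\sum_\beta r_\beta = \lambda\binom{D}{M}$. Equating the two expressions yields $\lambda = b\binom{N}{M}/\binom{D}{M}$, whence $(A^{(M)}\vec{x})_\beta = \lambda/b = \binom{N}{M}/\binom{D}{M}$ for all $\beta$ and the equation is satisfied.

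I do not expect a genuine obstacle here, since both implications reduce to elementary combinatorics once the substitution $\vec{x}=\frac{1}{b}\vec{1}_b$ is made. The one point requiring care is the double-counting identity that fixes $\lambda$: it is essential to use the $N$-uniformity of the hypergraph, so that every edge contributes the same number $\binom{N}{M}$ of $M$-subsets to the row sums of $A^{(M)}$, which is exactly what forces $\lambda$ to take the required value and makes the ``design'' condition and the matrix equation coincide.
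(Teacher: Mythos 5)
Your proof is correct and follows essentially the same route as the paper's: both reduce the matrix equation at $\vec{x}=\frac{1}{b}\vec{1}_b$ to the statement that every row of $A^{(M)}$ sums to $b\binom{N}{M}/\binom{D}{M}$, which is the $M$-design condition. The only difference is that you derive the identity $\lambda = b\binom{N}{M}/\binom{D}{M}$ by double-counting incident pairs, whereas the paper cites it from the design-theory literature; your version is simply more self-contained.
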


\begin{proof}

If the hypergraph is an $M$-design, then all $M$-subsets of $V$ appear in the same number of edges, which we call $\lambda$. In terms of $A^{(M)}$, this means that in each row of $A^{(M)}$, all the entries sum to $\lambda$, or equivalently $A^{(M)}\vec{1}_b = \lambda\vec{1}_{\binom{D}{M}}$. Plugging $\vec{x}=\frac{1}{b}\vec{1}_b$ into Eq.~(\ref{eq:Appendix system of equations general M, 2}), the LHS becomes $\frac{1}{b}A^{(M)}\vec{1}_b= \frac{\lambda}{b}\vec{1}_{\binom{D}{M}}$. Meanwhile, for an $M$-design, $\lambda$ is related to $D,N,$ and $b$ by~\cite{Stinson2004, ColbournDinitz2006}
\begin{equation}\label{eq:M-design parameter relation}
    \lambda = b\binom{N}{M}/\binom{D}{M}.
\end{equation}
Using Eq.~(\ref{eq:M-design parameter relation}), the LHS of Eq.~(\ref{eq:Appendix system of equations general M, 2}) can be further written as $\binom{N}{M}/\binom{D}{M}\vec{1}_{\binom{D}{M}}=\text{RHS}$, so $\vec{x}=\frac{1}{b}\vec{x}_b$ is indeed a solution.

Conversely, if $\vec{x}=\frac{1}{b}\vec{1}_b$ is a solution to Eq.~(\ref{eq:Appendix system of equations general M, 2}), then the equation reads $A^{(M)}\vec{1}_b=b\binom{N}{M}/\binom{D}{M}\vec{1}_{\binom{D}{M}}$. This implies that in each row of $A^{(M)}$, all the entries sum to exactly $b\binom{N}{M}/\binom{D}{M} = \lambda$, or equivalently, the hypergraph is an $M$-design where all $M$-subsets of $V$ appear in exactly $\lambda$ edges.

\end{proof}

\section{Nesting property of a maximally entangled state}\label{sec:appendix proof of theorem M leads to M'}

Here we prove the \hyperref[nesting-theorem]{nesting property} that if an $N$-fermion state is maximally $M$-body entangled, it is also $M'$-body entangled for any $M'\leq M$. We show two proofs of the claim: the first relies on Eq.~(\ref{eq:Diagonal Elements of MRDM in terms of Coefficients w.r.t. SDs}), and the second employs Eq.~(\ref{eq:System of equations general M}) and the language of hypergraphs.

\begin{theorem}\label{theorem:prop to identity gives propt to identity} 
Let $\ket{\Psi}$ be an $N$-fermion state with single-particle dimension $D$. Then $\forall M'\leq M$,
\begin{equation}
    \rho_\Psi^{(M)}=\frac{\binom{N}{M}}{\binom{D}{M}}\mathbb{I}_{\binom{D}{M}\times\binom{D}{M}}\Rightarrow \rho_\Psi^{(M')}=\frac{\binom{N}{M'}}{\binom{D}{M'}}\mathbb{I}_{\binom{D}{M'}\times\binom{D}{M'}}.
\end{equation}

\end{theorem}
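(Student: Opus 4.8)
The plan is to prove the single-step implication $\rho^{(M)}_\Psi \propto \mathbb{I} \Rightarrow \rho^{(M-1)}_\Psi \propto \mathbb{I}$, with the correct proportionality constant, and then iterate it $M-M'$ times. The engine is a fermionic contraction (partial-trace) identity that expresses each entry of $\rho^{(M-1)}$ as a signed sum of entries of $\rho^{(M)}$. The key trick is to arrange the bookkeeping so that on the diagonal the fermionic signs cancel, while off the diagonal the relevant $\rho^{(M)}$ entries are themselves off-diagonal and hence vanish by hypothesis; in this way the signs never need to be evaluated explicitly. This route is fully general and, unlike an argument through the diagonal-element formula Eq.~(\ref{eq:Diagonal Elements of MRDM in terms of Coefficients w.r.t. SDs}), does not require first invoking the \hyperref[overlap-criterion]{overlap criterion} to establish diagonality.

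First I would introduce, for each $(M-1)$-subset $\gamma$ of $\{1,\dots,D\}$, the $(N-M+1)$-particle vector $\ket{\Phi_\gamma}=C^{(M-1)}_\gamma\ket{\Psi}$, and note that $(\rho^{(M-1)})_{\alpha'\beta'}=\langle\Phi_{\beta'}|\Phi_{\alpha'}\rangle$. Acting on $\ket{\Phi_{\alpha'}}$ with the total number operator $\sum_{j=1}^D c^\dagger_j c_j$, which has eigenvalue $N-M+1$ on this sector, yields
\begin{equation}
(N-M+1)\,(\rho^{(M-1)})_{\alpha'\beta'} = \sum_{j=1}^{D}\langle\Psi|C^{(M-1)\dagger}_{\beta'}\,c^\dagger_j c_j\, C^{(M-1)}_{\alpha'}|\Psi\rangle .
\end{equation}
Using $c_j C^{(M-1)}_{\alpha'}=s_{\alpha'}(j)\,C^{(M)}_{\alpha'\cup\{j\}}$ for $j\notin\alpha'$ (and $0$ for $j\in\alpha'$), where $s_{\alpha'}(j)=\pm1$ is the parity of inserting $j$ into sorted order, and likewise on the bra side, each summand equals $s_{\alpha'}(j)\,s_{\beta'}(j)\,(\rho^{(M)})_{\alpha'\cup\{j\},\,\beta'\cup\{j\}}$.

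Next I would substitute $\rho^{(M)}=\tfrac{\binom{N}{M}}{\binom{D}{M}}\mathbb{I}$. For $\alpha'\neq\beta'$ every term has $\alpha'\cup\{j\}\neq\beta'\cup\{j\}$, so it is an off-diagonal element of $\rho^{(M)}$ and vanishes; hence $(\rho^{(M-1)})_{\alpha'\beta'}=0$. For $\alpha'=\beta'$ the signs satisfy $s_{\alpha'}(j)^2=1$, and the $D-(M-1)$ admissible values $j\notin\alpha'$ each contribute $\tfrac{\binom{N}{M}}{\binom{D}{M}}$, giving $(\rho^{(M-1)})_{\alpha'\alpha'}=\tfrac{D-M+1}{N-M+1}\tfrac{\binom{N}{M}}{\binom{D}{M}}$, a constant independent of $\alpha'$. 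A one-line binomial identity simplifies this to $\tfrac{\binom{N}{M-1}}{\binom{D}{M-1}}$ (alternatively, diagonality together with $\operatorname{Tr}\rho^{(M-1)}=\binom{N}{M-1}$ fixes the constant without any computation). Thus $\rho^{(M-1)}_\Psi=\tfrac{\binom{N}{M-1}}{\binom{D}{M-1}}\mathbb{I}$, and iterating the step down to $M'$ proves the theorem.

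The main obstacle is purely technical: correctly establishing the contraction identity, i.e.\ tracking the fermionic reordering signs $s_{\alpha'}(j),s_{\beta'}(j)$ and the dropping of $j\in\alpha'$ (or $j\in\beta'$) terms. The decisive observation that tames this is that the signs are never needed — they square to $+1$ on the diagonal and multiply a vanishing matrix element off the diagonal. I would also mention the elementary alternative matching the paper's style: once $\rho^{(M)}\propto\mathbb{I}$ forces the overlap criterion (hence diagonality of $\rho^{(M')}$), summing Eq.~(\ref{eq:Diagonal Elements of MRDM in terms of Coefficients w.r.t. SDs}) over all $M$-subsets $\beta\supseteq\beta'$ and counting, via $\binom{N-M'}{M-M'}$ and $\binom{D-M'}{M-M'}$, the $M$-subsets containing a fixed $M'$-subset gives the same conclusion; its only weak point is the implicit no-cancellation assumption behind the overlap criterion, which the contraction argument avoids entirely.
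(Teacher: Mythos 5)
Your proof is correct, and it takes a genuinely different route from both proofs in the paper. The contraction identity $\sum_{j}\langle\Psi|C^{(M-1)\dagger}_{\beta'}c^\dagger_j c_j C^{(M-1)}_{\alpha'}|\Psi\rangle=(N-M+1)(\rho^{(M-1)})_{\alpha'\beta'}$ is valid (the number operator acts as $N-M+1$ on $C^{(M-1)}_{\alpha'}\ket{\Psi}$), the reduction of each summand to $s_{\alpha'}(j)s_{\beta'}(j)(\rho^{(M)})_{\alpha'\cup\{j\},\beta'\cup\{j\}}$ is right, and the diagonal count $(D-M+1)/(N-M+1)\cdot\binom{N}{M}/\binom{D}{M}=\binom{N}{M-1}/\binom{D}{M-1}$ checks out. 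The paper's first proof works only with the diagonal entries of $\rho^{(M+1)}$ and $\rho^{(M)}$ via Eq.~(\ref{eq:Diagonal Elements of MRDM in terms of Coefficients w.r.t. SDs}) — essentially the diagonal part of your identity, done by counting SDs — but it must first assert that $\rho^{(M+1)}_\Psi\propto\mathbb{I}$ forces the constituent SDs to satisfy the \hyperref[overlap-criterion]{overlap criterion}, which is the converse of how that criterion is established in Sec.~\ref{subsec:setting up the system of equations} and implicitly assumes no cancellation among contributions to an off-diagonal element; the paper's second proof instead manipulates the incidence-matrix equation $A^{(M)}\vec x\propto\vec 1$ using the all-ones matrix. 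Your operator-level argument buys exactly what you claim: off-diagonal vanishing of $\rho^{(M-1)}$ follows directly from off-diagonal vanishing of $\rho^{(M)}$, with the fermionic signs either squaring to one or multiplying zero, so no representation of $\ket{\Psi}$ in a particular SD basis and no combinatorial no-cancellation assumption is needed. The paper's hypergraph proof (Theorem~\ref{theorem:prop to identity gives propt to identity SOE}) has the compensating advantage that it is a purely linear-algebraic statement about solutions of Eq.~(\ref{eq:System of equations general M}), which is reused verbatim in the particle-hole symmetry argument of Appendix~\ref{sec:appendix particle-hole symmetry}; your identity does not directly substitute for it there.
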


\begin{proof}
To prove the claim, it suffices to show that $\rho_\Psi^{(M+1)}\propto\mathbb{I}\Rightarrow \rho_\Psi^{(M)}\propto\mathbb{I}$.

We consider an $N$-fermion state $\ket{\Psi}$ with $\rho_\Psi^{(M+1)}\propto\mathbb{I}$, and write it in terms of $N$-fermion SDs:
\begin{equation}
    \ket{\Psi}=\sum_{\beta}\gamma_\beta\ket{\beta},
\end{equation}
where $\gamma_\beta\neq 0$, $\ket{\beta}=C^{(N)\dagger}_\beta\ket{0}$, and $\beta$ is an $N$-subset of single-particle orbitals $[D]=\{1,\dots,D\}$. Since $ \rho_\Psi^{(M+1)}\propto\mathbb{I}$, the set of constituent SDs $\{\ket{\beta}\}$ of $\ket{\Psi}$ satisfies the \hyperref[overlap-criterion]{overlap criterion} that any two SDs in the set have less than $N-(M+1)$ overlapping orbitals. Consequently, the number of overlapping orbitals is less than $N-M$, and $\rho^{(M)}_\Psi$ is diagonal.

To see that $\rho_\Psi^{(M)}$ is proportional to the identity, we consider an $M$-subset of orbitals, $\{i_1,\dots,i_M\}$. Using Eq.~(\ref{eq:Diagonal Elements of MRDM in terms of Coefficients w.r.t. SDs}), we write the following $D-M$ diagonal entries of $\rho^{(M+1)}_\Psi$ in terms of the coefficients $\gamma_\beta$:
\begin{equation}\label{eq:Diagonal entries of rho^(M+1)}
    \rho_{\Psi,\{i_1,\dots,i_M,j\}\{i_1,\dots,i_M,j\}}^{(M+1)}=\sum_{\substack{\beta,\\ \beta \supseteq  \{i_1,\dots,i_M,j\}}}|\gamma_\beta|^2,
\end{equation}
where $j\in[D]\setminus  \{i_1,\dots,i_M\}$. For each $j$, the RHS of Eq.~(\ref{eq:Diagonal entries of rho^(M+1)}) is the sum of $|\gamma_\beta|^2$ over a set of SDs
\begin{equation}
    S_j=\{\ket{\beta}:\beta \supseteq  \{i_1,\dots,i_M,j\}\}.
\end{equation}
We claim that the set of all contributing SDs on the RHS of Eq.~(\ref{eq:Diagonal entries of rho^(M+1)}) for all values of $j$ in $[D]\setminus  \{i_1,\dots,i_M\}$ is the set of $N$-fermion SDs that contains $\{i_1,\dots,i_M\}$, i.e.,
\begin{equation}\label{eq:contributing SDs}
    \bigcup_j S_j = \{\ket{\beta}:\beta\supseteq  \{i_1,\dots,i_M\}\}.
\end{equation}
To see this, we notice that any $\ket{\beta}$ containing $ \{i_1,\dots,i_M,j\}$ for some $j\in[D]\setminus  \{i_1,\dots,i_M\}$ automatically contains $  \{i_1,\dots,i_M\}$. This means that in Eq.~(\ref{eq:contributing SDs}), $\text{LHS}\subseteq\text{RHS}$. On the other hand, if an $N$-fermion SD $\ket
{\beta}$ contains $M$ orbitals $  \{i_1,\dots,i_M\}$, where $M<N$, then it will also contain $M+1$ orbitals $\{i_1,\dots,i_M,j\}$ for some orbital $j\in[D]\setminus  \{i_1,\dots,i_M\}$. This implies that $\text{LHS}\supseteq\text{RHS}$ in Eq.~(\ref{eq:contributing SDs}). Therefore, Eq.~(\ref{eq:contributing SDs}) holds as claimed.

For a given $N$-fermion SD $\ket{\beta}$ containing $  \{i_1,\dots,i_M\}$, we consider the $j$ values in $[D]\setminus  \{i_1,\dots,i_M\}$ for which $\ket{\beta}$ will contribute a term $|\gamma_\beta
|^2$ on the RHS of Eq.~(\ref{eq:Diagonal entries of rho^(M+1)}). This happens for $j$ values when $ \{i_1,\dots,i_M,j\}\subseteq \beta$, or equivalently $j\in \beta\setminus  \{i_1,\dots,i_M\}$ due to $\{i_1,\dots,i_M\}$ being already contained in $\beta$. As a result, there are $|\beta\setminus \{i_1,\dots,i_M\}|=N-M$ values of $j$ for which the RHS of Eq.~(\ref{eq:Diagonal entries of rho^(M+1)}) has the term $|\gamma_\beta|^2$.

Now, we sum Eq.~(\ref{eq:Diagonal entries of rho^(M+1)}) over $j\in[D]\setminus  \{i_1,\dots,i_M\}$:
\begin{equation}\label{eq:sum of rho^(M+1) diagonal entries}
    \sum_j \rho_{\Psi, \{i_1,\dots,i_M,j\} \{i_1,\dots,i_M,j\}}^{(M+1)}
    =\sum_j \sum_{\substack{\beta,\\ \beta \supseteq  \{i_1,\dots,i_M,j\}}}|\gamma_\beta|^2.
\end{equation}
On the LHS, since $\rho^{(M+1)}_\Psi\propto\mathbb{I}$, all its diagonal entries have the same value $\binom{N}{M+1}/\binom{D}{M+1}$, and the LHS evaluates to $(D-N)\binom{N}{M+1}/\binom{D}{M+1}$. On the RHS, we showed in Eq.~(\ref{eq:contributing SDs}) that the $|\gamma_\beta|^2$ terms arise from the set of SDs $\{\ket{\beta}:\beta\supseteq  \{i_1,\dots,i_M\}\}$. Each SD $\ket{\beta'}$ in the set contributes $|\beta'\setminus  \{i_1,\dots,i_M\}|=N-M$ identical terms of $|\gamma_{\beta'}|^2$, as argued in the previous paragraph. Therefore, Eq.~(\ref{eq:sum of rho^(M+1) diagonal entries}) reads
\begin{equation}
    (D-N)\frac{\binom{N}{M+1}}{\binom{D}{M+1}}=(N-M)\sum_{\substack{\beta,\\ \beta\supseteq   \{i_1,\dots,i_M\}}}|\gamma_\beta|^2,
\end{equation}
or equivalently
\begin{equation}\label{eq:rho^M diagonal entry appendix B}
    \sum_{\substack{\beta,\\ \beta\supseteq   \{i_1,\dots,i_M\}}}|\gamma_\beta|^2 = \frac{D-M}{N-M}\frac{\binom{N}{M+1}}{\binom{D}{M+1}}=\frac{\binom{N}{M}}{\binom{D}{M}}.
\end{equation}
Here the LHS is $\rho^{(M)}_{\Psi,  \{i_1,\dots,i_M\}  \{i_1,\dots,i_M\}}$ by Eq.~(\ref{eq:Diagonal Elements of MRDM in terms of Coefficients w.r.t. SDs}). Since Eq.~(\ref{eq:rho^M diagonal entry appendix B}) holds for any $M$-subset $  \{i_1,\dots,i_M\}$, all the diagonal entries of $\rho^{(M)}_\Psi$ have the same value $\binom{N}{M}/\binom{D}{M}$, and we conclude
\begin{equation}
    \rho^{(M)}_\Psi=\frac{\binom{N}{M}}{\binom{D}{M}}\mathbb{I}.
\end{equation}

\end{proof}

Next, we present an alternative proof of Theorem \ref{theorem:prop to identity gives propt to identity}. Given a maximally $M$-body entangled $N$-fermion $\ket{\Psi}$, its associated hypergraph (i.e., the set of constituent SDs) satisfies the \hyperref[overlap-criterion]{overlap criterion} that any two edges share fewer than $N-M'$ vertices for any $M'\leq M$. We now prove that Eq.~(\ref{eq:System of equations general M}) admits a solution for any $M'\leq M$ as well. Together, these results imply that $\ket{\Psi}$ is maximally $M'$-body entangled for all $M'\leq M$.

\begin{theorem}\label{theorem:prop to identity gives propt to identity SOE}
Let $(V,E)$ be an N-uniform hypergraph with $D$ vertices and $b$ edges. Let $A^{(M)}\in \{0,1\}^{\binom{D}{M}\times b}$ be its $M^\text{th}$ incidence matrix, and $\vec{x}\in \mathbb{R}^b$. Then $\forall M'\leq M$
\begin{equation}
    A^{(M)}\vec{x} = \frac{\binom{N}{M}}{\binom{D}{M}}\vec{1}_{\binom{D}{M}}
    \text{   }\Rightarrow \text{   }
     A^{(M')}\vec{x} = \frac{\binom{N}{M'}}{\binom{D}{M'}}\vec{1}_{\binom{D}{M'}}.
\end{equation}
\end{theorem}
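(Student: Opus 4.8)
The plan is to reduce the statement to the single step $M' = M-1$ and then iterate. The implication $A^{(M)}\vec{x} = \tfrac{\binom{N}{M}}{\binom{D}{M}}\vec{1}_{\binom{D}{M}} \Rightarrow A^{(M-1)}\vec{x} = \tfrac{\binom{N}{M-1}}{\binom{D}{M-1}}\vec{1}_{\binom{D}{M-1}}$ has exactly the same form as the original claim with $M$ replaced by $M-1$, so composing it $M-M'$ times yields the general statement. Thus it suffices to relate the two consecutive incidence matrices $A^{(M)}$ and $A^{(M-1)}$ by a single linear map and track how the right-hand side transforms under it.

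The central object will be the \emph{inclusion matrix} $W$, the $\binom{D}{M-1}\times\binom{D}{M}$ binary matrix with $W_{\gamma\beta}=1$ precisely when the $(M-1)$-subset $\gamma$ is contained in the $M$-subset $\beta$. The key step is to establish the identity
\begin{equation}
W A^{(M)} = (N-M+1)\,A^{(M-1)}.
\end{equation}
I would prove this entrywise: $(W A^{(M)})_{\gamma k}$ counts the $M$-subsets $\beta$ with $\gamma \subset \beta \subseteq e_k$. If $\gamma \not\subseteq e_k$ there are none, while if $\gamma \subseteq e_k$ every such $\beta$ is obtained by adjoining one vertex of $e_k\setminus\gamma$, and since $|e_k| = N$ there are exactly $N-(M-1) = N-M+1$ choices; this reproduces $A^{(M-1)}_{\gamma k}$ up to the factor $N-M+1$. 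By the same counting I would record the auxiliary identity $W\,\vec{1}_{\binom{D}{M}} = (D-M+1)\,\vec{1}_{\binom{D}{M-1}}$, since a fixed $(M-1)$-subset is contained in $D-M+1$ of the $M$-subsets.

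With these in hand I would left-multiply the hypothesis by $W$. The left-hand side becomes $(N-M+1)A^{(M-1)}\vec{x}$ and the right-hand side becomes $\tfrac{\binom{N}{M}}{\binom{D}{M}}(D-M+1)\vec{1}_{\binom{D}{M-1}}$, so that
\begin{equation}
A^{(M-1)}\vec{x} = \frac{D-M+1}{N-M+1}\,\frac{\binom{N}{M}}{\binom{D}{M}}\,\vec{1}_{\binom{D}{M-1}}.
\end{equation}
It then remains only to verify the elementary identity $\tfrac{D-M+1}{N-M+1}\tfrac{\binom{N}{M}}{\binom{D}{M}} = \tfrac{\binom{N}{M-1}}{\binom{D}{M-1}}$, which follows from $\binom{N}{M} = \tfrac{N-M+1}{M}\binom{N}{M-1}$ and the analogous relation for $D$. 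I expect the main obstacle to be the counting identity for $W A^{(M)}$: recognizing that the inclusion matrix is the correct intertwiner and pinning down the multiplicity $N-M+1$ (which encodes that each edge contains exactly $N$ vertices) is where the combinatorial content lies, whereas the concluding binomial manipulation is routine. As a cross-check, this argument is the linear-algebraic shadow of the diagonal-sum counting used in the first proof of Theorem \ref{theorem:prop to identity gives propt to identity}, and it notably does not invoke the \hyperref[overlap-criterion]{overlap criterion}, consistent with the purely matrix-level nature of the claim.
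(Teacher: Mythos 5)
Your proposal is correct, and every step checks out: the entrywise count behind $W A^{(M)} = (N-M+1)\,A^{(M-1)}$ is right (if $\gamma \subseteq e_k$ the admissible $\beta$ are exactly $\gamma\cup\{v\}$ with $v\in e_k\setminus\gamma$, giving $N-M+1$ of them, and zero otherwise), the row sums $W\vec{1}_{\binom{D}{M}}=(D-M+1)\vec{1}_{\binom{D}{M-1}}$ are right, and the final ratio $\tfrac{D-M+1}{N-M+1}\tfrac{\binom{N}{M}}{\binom{D}{M}}=\tfrac{\binom{N}{M-1}}{\binom{D}{M-1}}$ is the same identity the paper lands on. The underlying combinatorics is identical to the paper's argument --- both reduce to the single step $M\to M-1$ and both rest on double-counting the ways to extend a fixed $(M-1)$-subset to an $M$-subset inside an edge --- but your packaging is genuinely cleaner. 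The paper instead writes $A^{(M)}_{\{i_1,\dots,i_M\}\alpha}$ as a product $A_{i_1\alpha}\cdots A_{i_M\alpha}$, defines an auxiliary vector $\vec{y}^{\{i_2,\dots,i_M\}}$, sums over \emph{all} $j\in V$ (which forces a case split between $j\in\{i_2,\dots,i_M\}$, where idempotence of $A_{j\alpha}$ produces the unknown $s$, and $j$ outside, where the hypothesis applies), and then extracts $s$ by multiplying by the all-ones matrix $J_{D\times D}$. Your single left-multiplication by the inclusion matrix $W$ replaces all of that bookkeeping with one standard design-theoretic intertwining identity, and it makes transparent why the overlap criterion plays no role here --- a point the paper's proof also respects but does not highlight. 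The one thing you might state explicitly for completeness is the trivial base case $M'=M$ before iterating.
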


\begin{proof}
To prove the claim, it suffices to show that if the equation holds for $M$, then it holds for $M-1$. 

We first write the $M^\text{th}$ incidence matrix in terms of the incidence matrix. Each row of $A^{(M)}$ can be labeled by an $M$-subset $\{i_1,\dots,i_M\}$ of $\{1,\dots,D\}$, corresponding to an $M$-subset of $V$. Each column of $A^{(M)}$ can be labeled by an index $\alpha\in\{1,\dots,b\}$, corresponding to an edge in $E$. Consequently, each entry of $A^{(M)}$ can be labeled by a pair $(\{i_1,\dots,i_M\},\alpha)$. The entries of the coefficient matrix $A^{(M)}$ can be written in terms of the incidence matrix $A$ as
\begin{equation}\label{eq:A_c in terms of A}
    A^{(M)}_{\{i_1,i_2,\dots,i_M\}\alpha} = A_{i_1\alpha}A_{i_2\alpha}\cdots A_{i_M\alpha},
\end{equation}
where on the RHS, the $\alpha$ is \textit{not} summed over. 

The $\{i_1,\dots,i_M\}$-th row of the equation
$A^{(M)}\vec{x} = \binom{N}{M}/\binom{D}{M}\vec{1}_{\binom{D}{M}}$ then reads
\begin{equation}\label{eq:i_1 not in i_2 to i_M}
    \sum_{\alpha=1}^b A^{(M)}_{\{i_1,\dots,i_M\}\alpha} x_\alpha =    \sum_{\alpha=1}^b  A_{i_1\alpha}\cdots A_{i_M\alpha} x_\alpha = \frac{\binom{N}{M}}{\binom{D}{M}}.
\end{equation}
Note that here $i_1\notin\{i_2,\dots,i_M\}$. On the other hand, for $j\in\{i_2,\dots,i_M\}$,
\begin{equation}\label{eq:i_1 in i_2 to i_M}
    \sum_{\alpha=1}^b  A_{j\alpha}A_{i_2\alpha}\cdots A_{i_M\alpha} x_\alpha = \sum_{\alpha=1}^b A_{i_2\alpha}\cdots A_{i_M\alpha} x_\alpha,
\end{equation}
where we use the fact that $A_{j\alpha}^2=A_{j\alpha}$ due to $A_{j\alpha}\in\{0,1\}$. 

We define a vector $\vec{y}^{\{i_2,\dots,i_M\}}\in \mathbb{R}^b$ as
\begin{equation}
    y_\alpha^{\{i_2,\dots,i_M\}} \equiv A_{i_2\alpha}\cdots A_{i_M\alpha} x_\alpha,\text{ }\alpha\in\{1,\dots,b\},
\end{equation}
and denote the sum of its entries as $s$:
\begin{equation}
    s \equiv \sum_{\alpha=1}^b y_\alpha^{\{i_2,\dots,i_M\}}.
\end{equation}
In terms of $\vec{y}^{\{i_2,\dots,i_M\}}$, Eqs. (\ref{eq:i_1 not in i_2 to i_M}) and (\ref{eq:i_1 in i_2 to i_M}) read
\begin{equation}
     (A\vec{y}^{\{i_2,\dots,i_M\}})_j = 
\begin{cases}
s, & \text{if } j \in \{i_2,\dots,i_M\}, \\
\frac{\binom{N}{M}}{\binom{D}{M}}, & \text{otherwise. }
\end{cases}
\end{equation}
In matrix notation, this can be written as
\begin{equation}\label{eq:Ay in matrix notation}
    A\vec{y}^{\{i_2,\dots,i_M\}} = \frac{\binom{N}{M}}{\binom{D}{M}}\vec{1}_D + \left[s - \frac{\binom{N}{M}}{\binom{D}{M}}\right]\vec{1}_D^{\{i_2,\dots,i_M\}},
\end{equation}
where the vector $\vec{1}_D^{\{i_2,\dots,i_M\}}$ is defined as
\begin{equation}
     \vec{1}_{D,j}^{\{i_2,\dots,i_M\}} = 
\begin{cases}
1, & \text{if } j \in \{i_2,\dots,i_M\}, \\
0, & \text{otherwise. }
\end{cases}
\end{equation}

Now, we multiply both sides of Eq.~(\ref{eq:Ay in matrix notation}) by the $D\times D$ all-ones matrix $J_{D\times D}$:
\begin{equation}\label{eq:Ay in matrix notation, multiplied by J}
\begin{aligned}
    J_{D\times D}A\vec{y}^{\{i_2,\dots,i_M\}} &= \frac{\binom{N}{M}}{\binom{D}{M}}J_{D\times D}\vec{1}_D \\
    &+\left[s - \frac{\binom{N}{M}}{\binom{D}{M}}\right]J_{D\times D}\vec{1}_D^{\{i_2,\dots,i_M\}}.
\end{aligned}
\end{equation}
Here
\begin{align}
    J_{D\times D}A &= NJ_{D\times b},\\
    J_{D\times D}\vec{1}_D &= D\vec{1}_D,\\ 
    J_{D\times D}\vec{1}_D^{\{i_2,\dots,i_M\}} &= (M-1)\vec{1}_D.
\end{align}
Together with the fact that
\begin{equation}
    J_{D\times b}\vec{y}^{\{i_2,\dots,i_M\}} = \left(\sum_{\alpha=1}^b y_\alpha^{\{i_2,\dots,i_M\}}\right)\vec{1}_D = s\vec{1}_D,
\end{equation}
we can write Eq.~(\ref{eq:Ay in matrix notation, multiplied by J}) as
\begin{equation}
\begin{aligned}
    Ns\vec{1}_D = \frac{\binom{N}{M}}{\binom{D}{M}}D\vec{1}_D + \left[s - \frac{\binom{N}{M}}{\binom{D}{M}}\right](M-1)\vec{1}_D.
\end{aligned}
\end{equation}
Solving for $s$ we get
\begin{equation}
\begin{aligned}
   s = \frac{\binom{N}{M}}{\binom{D}{M}}\frac{D-M+1}{N-M+1} = \frac{\binom{N}{M-1}}{\binom{D}{M-1}},
\end{aligned}
\end{equation}
which leads to 
\begin{equation}
\begin{aligned}
  \sum_{\alpha=1}^b A_{i_2\alpha}\cdots A_{i_M\alpha} x_\alpha &= \frac{\binom{N}{M-1}}{\binom{D}{M-1}}.\label{eq:M-1 equation}
\end{aligned}
\end{equation}
The LHS of Eq.~(\ref{eq:M-1 equation}) is the $\{i_2,\dots,i_M\}$-th entry of the vector $A^{(M-1)} \vec{x}$. Since Eq.~(\ref{eq:M-1 equation}) holds for all $(M-1)$-subsets $\{i_2,\dots,i_M\}$, it tells us that all the entries of $A^{(M-1)} \vec{x}$ take the same value $\binom{N}{M-1}/\binom{D}{M-1}$. In other words,
\begin{equation}
    A^{(M-1)}\vec{x} = \frac{\binom{N}{M-1}}{\binom{D}{M-1}}\vec{1}_{\binom{D}{M-1}},
\end{equation}
as claimed.
\end{proof}

\section{Particle-hole symmetry}\label{sec:appendix particle-hole symmetry}

In this section, we show that for a $D$-dimensional single-particle Hilbert space, the presence (absence) of a maximally $M$-body entangled $N$-fermion state implies the presence (absence) of a maximally $M$-body entangled $(D-N)$-fermion state. Given a maximally $M$-body entangled $N$-fermion state, a maximally $M$-body entangled $(D-N)$-fermion state is constructed using the complement hypergraph of the hypergraph associated with the $N$-fermion state, as defined below.

For a hypergraph $(V,E)$, its \textit{complement hypergraph} is a hypergraph $(V,\olsi{E})$, where the vertex set remains $V$, and the edge set is given by $\olsi{E}=\{\olsi{e}\equiv V\setminus e:e\in E\}$. The incidence matrix $\olsi{A}$ of $(V,\olsi{E})$ has the same dimensions as the incidence matrix $A$ of $(V,E)$ and is given by $\olsi{A}=J-A$, where $J$ is the all-ones matrix. In the case where $(V,E)$ is $N$-uniform, i.e., all the edges in $E$ contain $N$ vertices, all the complement edges in $\olsi{E}$ contain $D-N$ vertices, making the complement hypergraph $(D-N)$-uniform.

For an $N$-fermion state $\ket{\Psi}$ with $\rho^{(M)}_\Psi\propto\mathbb{I}$, the associated hypergraph $(V,E)$ satisfies the \hyperref[overlap-criterion]{overlap criterion} that any two edges share fewer than $N-M$ vertices. We now prove that the complement hypergraph $(V,\olsi{E})$ also satisfies the overlap criterion, with any two edges sharing fewer than $(D-N)-M$ vertices.

\begin{theorem}[\textbf{Overlaps in the Complement Hypergraph}]\label{theorem:overlaps in the complement hypergraph}
Let $(V,E)$ be an $N$-uniform hypergraph with $D$ vertices and $p_{ij}=|e_i\cap e_j|$ be the number of common vertices between two edges $e_i,e_j\in E$. The number of common vertices $\olsi{p}_{ij}=|\olsi{e}_i\cap \olsi{e}_j|$ between two edges $\olsi{e}_i,\olsi{e}_j\in \olsi{E}$ of the complement hypergraph $(V,\olsi{E})$ is $\olsi{p}_{ij} = D - 2N + p_{ij}$. Additionally, if $p_{ij} < N - M \,\,$, then $\olsi{p}_{ij} < (D - N) - M$.
\end{theorem}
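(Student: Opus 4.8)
The plan is to establish the exact count $\overline{p}_{ij} = D - 2N + p_{ij}$ by a direct set-theoretic argument and then read off the inequality as an immediate consequence. First I would write each complement edge explicitly as $\overline{e}_i = V \setminus e_i$, so that by De Morgan's law the intersection of two complement edges is the complement of the union of the original edges,
\[
\overline{e}_i \cap \overline{e}_j = V \setminus (e_i \cup e_j).
\]
Taking cardinalities then gives $\overline{p}_{ij} = D - |e_i \cup e_j|$, since $|V|=D$.

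The second step is to evaluate $|e_i \cup e_j|$ by inclusion-exclusion. Invoking $N$-uniformity, $|e_i| = |e_j| = N$, while $|e_i \cap e_j| = p_{ij}$ by definition, so $|e_i \cup e_j| = 2N - p_{ij}$. Substituting back yields $\overline{p}_{ij} = D - (2N - p_{ij}) = D - 2N + p_{ij}$, which is the claimed identity; as a consistency check, this agrees with the incidence-matrix relation $\overline{A} = J - A$ recorded earlier in this appendix, which shifts every pairwise overlap by the fixed amount $D-2N$.

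For the final inequality I would simply substitute the hypothesis $p_{ij} < N - M$ into this formula, obtaining $\overline{p}_{ij} = D - 2N + p_{ij} < D - 2N + (N - M) = (D - N) - M$, which is exactly the \hyperref[overlap-criterion]{overlap criterion} for the $(D-N)$-uniform complement hypergraph. There is no genuine obstacle in this argument—it is pure counting—so the only points requiring care are invoking uniformity for \emph{both} edges and noting that the strict inequality survives because the substitution merely adds the fixed quantity $D - 2N$ to both sides, preserving the direction of the bound.
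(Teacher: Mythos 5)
Your argument is correct and follows essentially the same route as the paper's own proof: both compute $\olsi{e}_i\cap\olsi{e}_j = V\setminus(e_i\cup e_j)$ by De Morgan, evaluate $|e_i\cup e_j|=2N-p_{ij}$ by inclusion--exclusion with $N$-uniformity, and then substitute the hypothesis $p_{ij}<N-M$ to obtain the strict bound. No gaps; the consistency check against $\olsi{A}=J-A$ is a nice extra touch but not needed.
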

\begin{proof}
We can write the intersection of $\olsi{e}_i$ and $\olsi{e}_j$ as
\begin{equation}\label{eq:intersection of complement edges}
    \olsi{e}_i \cap \olsi{e}_j = (V\setminus e_i) \cap (V\setminus e_j) = V \setminus~(e_i \cup e_j).
\end{equation}
Since $e_i$ and $e_j$ both have $N$ elements and their intersection has size $p_{ij}$, the size of their union is $2N-p_{ij}$. The Eq.~(\ref{eq:intersection of complement edges}) then implies $\olsi{p}_{ij}=D-2N + p_{ij}$. If additionally $p_{ij}< N - M$, then $\olsi{p}_{ij} < N - M + D - 2N = (D - N) - M$.
\end{proof}

Next, we prove that the complement hypergraph $(V,\olsi{E})$ not only satisfies the \hyperref[overlap-criterion]{overlap criterion}, but also has a solution to its Eq.~(\ref{eq:System of equations general M}). Specifically, if $\vec{x}$ is a solution to Eq.~(\ref{eq:System of equations general M}) of $(V,E)$, then it is also a solution for $(V,\olsi{E})$. As a result, given a maximally $M$-body entangled $N$-fermion state $\ket{\Psi}=\sum_{k=1}^b\gamma_k \ket{\text{SD}_k}$ with associated hypergraph $(V,E)$, the $(D-N)$-fermion state $\ket{\ols{\Psi}}=\sum_{k=1}^b\gamma_k \ket{\overline{\text{SD}}_k}$ is also maximally $M$-body entangled. Here, the associated hypergraph of $\{\ket{\overline{\text{SD}}_k\}}_{k=1}^b$ is the complement hypergraph $(V,\olsi{E})$.

\begin{theorem}[\textbf{Particle-Hole Symmetry}]\label{theorem:particle-hole symmetry}
Let $(V,E)$ be an N-uniform hypergraph with $D$ vertices and $b$ edges. Let $A^{(M)}\in \{0,1\}^{\binom{D}{M}\times b}$ be its $M^\text{th}$ incidence matrix and $\vec{x}\in \mathbb{R}^b$. Then
\begin{equation}
    A^{(M)}\vec{x} = \frac{\binom{N}{M}}{\binom{D}{M}}\vec{1}_{\binom{D}{M}}
    \text{   }\Rightarrow \text{   }
    \olsi{A}^{(M)}\vec{x} = \frac{\binom{D-N}{M}}{\binom{D}{M}}\vec{1}_{\binom{D}{M}},
\end{equation}
where $\olsi{A}^{(M)}$ is the $M^\text{th}$ incidence matrix of the complement hypergraph $(V,\olsi{E})$.
\end{theorem}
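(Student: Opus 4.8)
The plan is to express the complement incidence matrix $\olsi{A}^{(M)}$ entrywise in terms of the ordinary incidence matrix $A=A^{(1)}$ and then reduce everything to the lower-order equations guaranteed by the nesting result, Theorem~\ref{theorem:prop to identity gives propt to identity SOE}. Since a vertex lies in a complement edge $\olsi{e}_k$ precisely when it is absent from $e_k$, we have $\olsi{A}_{ik}=1-A_{ik}$, so an $M$-subset $\beta=\{i_1,\dots,i_M\}$ is contained in $\olsi{e}_k$ exactly when it is disjoint from $e_k$. Hence $\olsi{A}^{(M)}_{\beta k}=\prod_{i\in\beta}(1-A_{ik})$, and expanding the product by inclusion--exclusion gives $\olsi{A}^{(M)}_{\beta k}=\sum_{S\subseteq\beta}(-1)^{|S|}A^{(|S|)}_{Sk}$, where $A^{(j)}_{Sk}=\prod_{i\in S}A_{ik}$ is the $j$th incidence-matrix entry of the $j$-subset $S$ (with the empty product giving $A^{(0)}_{\emptyset k}=1$).

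First I would multiply this identity by $x_k$ and sum over $k$, so that by linearity $(\olsi{A}^{(M)}\vec{x})_\beta=\sum_{j=0}^{M}(-1)^j\sum_{S\subseteq\beta,\,|S|=j}(A^{(j)}\vec{x})_S$. The hypothesis $A^{(M)}\vec{x}=\binom{N}{M}/\binom{D}{M}\,\vec{1}_{\binom{D}{M}}$ together with Theorem~\ref{theorem:prop to identity gives propt to identity SOE} forces $(A^{(j)}\vec{x})_S=\binom{N}{j}/\binom{D}{j}$ for every $j$-subset $S$ and every $1\le j\le M$. The $j=0$ term needs $\sum_k x_k=1$, which I would obtain by summing all $\binom{D}{M}$ rows of the hypothesis: the left side equals $\binom{N}{M}\sum_k x_k$ since each edge contains exactly $\binom{N}{M}$ $M$-subsets, while the right side equals $\binom{N}{M}$. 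Because the value $\binom{N}{j}/\binom{D}{j}$ does not depend on the particular $j$-subset $S$ and $\beta$ has $\binom{M}{j}$ subsets of size $j$, the expression collapses to $(\olsi{A}^{(M)}\vec{x})_\beta=\sum_{j=0}^{M}(-1)^j\binom{M}{j}\binom{N}{j}/\binom{D}{j}$, manifestly the same for every $\beta$; thus $\olsi{A}^{(M)}\vec{x}\propto\vec{1}_{\binom{D}{M}}$, and only the constant remains to be identified.

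The one genuine calculation — and the step I expect to be the main obstacle — is the combinatorial identity $\sum_{j=0}^{M}(-1)^j\binom{M}{j}\binom{N}{j}/\binom{D}{j}=\binom{D-N}{M}/\binom{D}{M}$. Rather than grind through a hypergeometric manipulation, I would argue it combinatorially: using $\binom{M}{j}/\binom{D}{j}=\binom{D-j}{M-j}/\binom{D}{M}$, the left side is exactly the inclusion--exclusion count of the $M$-subsets of $\{1,\dots,D\}$ that avoid a fixed $N$-subset, divided by $\binom{D}{M}$, i.e. the probability that a uniformly random $M$-subset misses the fixed $N$-set. That probability equals $\binom{D-N}{M}/\binom{D}{M}$ by direct counting, which is the desired right-hand side and completes the proof.
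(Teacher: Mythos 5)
Your proof is correct and follows essentially the same route as the paper's: write $\olsi{A}^{(M)}_{\beta k}=\prod_{i\in\beta}(1-A_{ik})$, expand, invoke Theorem~\ref{theorem:prop to identity gives propt to identity SOE} to evaluate every lower-order term $(A^{(j)}\vec{x})_S=\binom{N}{j}/\binom{D}{j}$, and reduce the claim to the single identity $\sum_{j=0}^{M}(-1)^j\binom{M}{j}\binom{N}{j}/\binom{D}{j}=\binom{D-N}{M}/\binom{D}{M}$. The only departures are minor and both to the good: you evaluate that identity by a direct inclusion--exclusion count of $M$-subsets avoiding a fixed $N$-subset rather than the paper's ${}_{2}F_{1}$/Chu--Vandermonde manipulation, and you explicitly justify the $j=0$ term $\sum_k x_k=1$ by summing the rows of the hypothesis, a normalization the paper leaves implicit.
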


\begin{proof}
Let $A\in\{0,1\}^{D\times b}$ be the incidence matrix of $(V,E)$. For the complement hypergraph $(V,\olsi{E})$ with incidence matrix $\olsi{A}=J-A$, we can write
\begin{equation}\label{eq:A_c entry for complement hypergraph}
\begin{aligned}
    \olsi{A}^{(M)}_{\{i_1,i_2,\dots,i_M\}\alpha} &= \olsi{A}_{i_1\alpha}\olsi{A}_{i_2\alpha}\cdots \olsi{A}_{i_M\alpha}\\
    &= (1- A_{i_1\alpha})(1-A_{i_2\alpha})\cdots(1-A_{i_M\alpha}) \\
    &= \prod_{k=1}^M(1-A_{i_k\alpha}).
\end{aligned}
\end{equation}
Each factor in this product is of the form $(1-A_{\bullet \alpha})$. By expanding this product as a sum, a term with order $k$ in $A_{\bullet \alpha}$ is proportional to
\begin{equation}\label{eq:k-th order term in A}
\underbrace{A_{\bullet \alpha}A_{\bullet \alpha}\cdots A_{\bullet \alpha}}_{k\text{ factors}},
\end{equation}
where the $k$ $\bullet$'s denote $k$ different indices. From Eq.~(\ref{eq:A_c in terms of A}), we identify Eq.~(\ref{eq:k-th order term in A}) as an entry of the $\alpha$-th column of the $k^\text{th}$ incidence matrix $A^{(k)}$ of $(V,E)$. With this in mind, we write the RHS of Eq.~(\ref{eq:A_c entry for complement hypergraph}) as
\begin{equation}\label{eq:binomial expansion}
\begin{aligned}
    \prod_{k=1}^M(1-A_{i_k\alpha}) &=
    (1-A_{\bullet\alpha})^M\\
    &= \sum_{k=0}^M\binom{M}{k}(-1)^kA_{\bullet\alpha}^k,
\end{aligned}
\end{equation}
where $A_{\bullet\alpha}^k$ has the meaning of Eq.~(\ref{eq:k-th order term in A}).

Using Eqs.~(\ref{eq:A_c entry for complement hypergraph}) and (\ref{eq:binomial expansion}), the $\{i_1,\dots,i_M\}$-th entry of $\olsi{A}^{(M)}\vec{x}$ can be written as 
\begin{equation}\label{eq:Entry of A_c x for complment}
\begin{aligned}
    \left[\olsi{A}^{(M)}\vec{x}\right]_{\{i_1,\dots, i_M\}} &= \sum_{\alpha=1}^b \olsi{A}_{\{i_1,\dots,i_M\}\alpha}x_\alpha\\
    &= \sum_{k=0}^M\binom{M}{k}(-1)^k\sum_{\alpha=1}^b A_{\bullet\alpha}^k x_\alpha.
\end{aligned}
\end{equation}
Here $\sum_{\alpha=1}^{b}A_{\bullet\alpha}^k x_\alpha$ is an entry of the vector $A^{(k)}\vec{x}$. We have shown in Theorem \ref{theorem:prop to identity gives propt to identity SOE} that $A^{(k)}\vec{x} = \binom{N}{k}/\binom{D}{k}\vec{1}_{\binom{D}{k}}$ for $k \leq M$, so all entries of $A^{(k)}\vec{x}$, hence $\sum_{\alpha=1}^{b}A_{\bullet\alpha}^k x_\alpha$, are equal to $\binom{N}{k}/\binom{D}{k}$. The Eq.~(\ref{eq:Entry of A_c x for complment}) then reads
\begin{equation}\label{eq:Entry of A_c x for Complement, 2nd form}
    \left[\olsi{A}^{(M)}\vec{x}\right]_{\{i_1,\dots, i_M\}} = \sum_{k=0}^M(-1)^k\binom{M}{k}\frac{\binom{N}{k}}{\binom{D}{k}}.
\end{equation}
To proceed, for a given integer $n$, we denote its $k$-th rising factorial as $n^{(k)}$ and the $k$-th falling factorial as $n_{(k)}$:
\begin{align}
    n^{(k)} &= \underbrace{n(n+1)\cdots(n+k-1)}_{k\text{ factors}},\\
    n_{(k)} &= \underbrace{n(n-1)\cdots(n-k+1)}_{k\text{ factors}}.
\end{align}
Using the fact that $(-n)^{(k)}=(-1)^k n_{(k)}$, we can write
\begin{equation}
    \frac{\binom{N}{k}}{\binom{D}{k}} = \frac{N_{(k)}}{D_{(k)}} 
    = \frac{(-N)^{(k)}}{(-D)^{(k)}}.
\end{equation}
Plugging this into Eq.~(\ref{eq:Entry of A_c x for Complement, 2nd form}), we obtain
\begin{equation}
\begin{aligned}
    \left[\olsi{A}^{(M)}\vec{x}\right]_{\{i_1,\dots,i_M\}} = \sum_{k=0}^M(-1)^k\binom{M}{k} \frac{(-N)^{(k)}}{(-D)^{(k)}} 1^k.
\end{aligned}
\end{equation}
The RHS by definition is the hypergeometric function
\begin{equation}
\begin{aligned}
    {}_{2}{F}_1\left(-M, -N; -D; 1\right) &= \frac{[-D-(-N)]^{(M)}}{(-D)^{(M)}} \\
    &= \frac{(D-N)_{(M)}}{D_{(M)}} = \frac{\binom{D-N}{M}}{\binom{D}{M}}.
\end{aligned}
\end{equation}
Combining the two, we get $\left[\olsi{A}^{(M)}\vec{x}\right]_{\{i_1,\dots,i_M\}} = \binom{D-N}{M}/\binom{D}{M}$ for any $M$-subset $\{i_1,\dots,i_M\}$. Hence, we see that all entries of $\olsi{A}^{(M)}\vec{x}$ have the same value, which implies that
\begin{equation}
    \olsi{A}^{(M)}\vec{x} = \frac{\binom{D-N}{M}}{\binom{D}{M}}\vec{1}_{\binom{D}{M}}.
\end{equation}
\end{proof}

\bibliography{refs}

\end{document}